\documentclass[lettersize,journal]{IEEEtran}
\usepackage{amsmath,amsthm,amssymb,amsfonts}
\usepackage{algorithmic}
\usepackage{array}
\usepackage{cite}
\usepackage[caption=false,font=normalsize,labelfont=sf,textfont=sf]{subfig}
\usepackage{textcomp}
\usepackage{stfloats}
\usepackage{url}
\usepackage{verbatim}
\usepackage{graphicx}
\usepackage{cuted}
\usepackage{stfloats}
\usepackage{xcolor}
\usepackage{enumitem}
\hyphenation{op-tical net-works semi-conduc-tor IEEE-Xplore}
\def\BibTeX{{\rm B\kern-.05em{\sc i\kern-.025em b}\kern-.08em
		T\kern-.1667em\lower.7ex\hbox{E}\kern-.125emX}}
\usepackage{balance}
\usepackage{hyperref}
\usepackage[capitalize]{cleveref}

\newtheorem{theorem}{Theorem}

\newtheorem{lemma}{Lemma}
\begin{document}
	\title{Intra-symbol Differential Amplitude Shift Keying-aided Blind Detector for Ambient Backscatter Communication Systems}
	\author{
	Shuaijun Ma, Peng Wei, Sa Xiao, Jianquan Wang, Wanbin Tang, and
	Wei Xiang, ~\IEEEmembership{Senior Member, ~IEEE} 
	\thanks{
	This work was supported in part by the National Natural Science Foundation of China under Grant 62001094 and Grant 62301117; 
	in part by the China Postdoctoral Science Foundation under Grant 2020M683290. {\it (Corresponding author: Peng Wei.)}
	
	S. Ma, P. Wei, S. Xiao, J. Wang, and W. Tang are with the National Key Laboratory of Wireless Communications, University of Electronic Science and Technology of China, Chengdu 611731, China (e-mail: shuaijun\_ma@163.com; wppisces@uestc.edu.cn;  xiaosa@uestc.edu.cn; jqwang@uestc.edu.cn; wbtang@uestc.edu.cn).
		
	Wei Xiang is with the School of Computing, Engineering and Mathematical Sciences, La Trobe University, Melbourne, VIC 3086, Australia (e-mail: w.xiang@latrobe.edu.au).
	}}
	\maketitle
	
	\begin{abstract}
	Ambient backscatter communications (AmBC) are a promising technology for addressing the energy consumption challenge in wireless communications through the reflection or absorption of surrounding radio frequency (RF) signals. However, it grapples with the intricacies of ambient RF signal and the round-trip path loss. For traditional detectors, the incorporation of pilot sequences results in a reduction in spectral efficiency. Furthermore, traditional energy-based detectors are inherently susceptible to a notable error floor issue, attributed to the co-channel direct link interference (DLI). Consequently, this paper proposes a blind symbol detector without the prior knowledge of the channel state information, signal variance, and noise variance. By leveraging the intra-symbol differential amplitude shift keying (IDASK) scheme, this detector effectively redirects the majority of the DLI energy towards the largest eigenvalue of the received sample covariance matrix, thereby utilizing the second largest eigenvalue for efficient symbol detection. In addition, this paper conducts theoretical performance analyses of the proposed detector in terms of the false alarm probability, missed detection probability, and the bit-error rate (BER) lower bound. Simulation results demonstrate that the proposed blind detector exhibits a significant enhancement in symbol detection performance compared to its traditional counterparts.
	\end{abstract}
	
	\begin{IEEEkeywords}
	Ambient backscatter communication, Blind symbol detector, Intra-symbol differential amplitude shift keying.
	\end{IEEEkeywords}

	\section{Introduction}
	Recently, ambient backscatter communications (AmBC) have emerged as an energy-efficient technology to satisfy the low-power radio frequency (RF) requirements in green Internet of Things (IoT)\cite{IOT1,IOT2}. Unlike traditional communication system of generating RF signal for information transmission, an AmBC system uses backscatter devices (BDs) to reflect surrounding RF signals to the receiver\cite{AmBC1,AmBC2}. In addition, it can achieve high spectral utilization efficiency by sharing the same spectra with traditional communication systems. However, the signal reflected through the transmitter-BD-receiver link in the AmBC system are invariably affected by the direct link interference (DLI) from an ambient RF transmitter. Thus, successful symbol detection in the AmBC system necessitates the estimation of: (i) the direct channel from the RF transmitter to the corresponding receiver; and (ii) the cascaded channel in the transmitter-BD-receiver link. Nonetheless, the path loss of the cascaded channel is much higher than that of the direct channel \cite{hard_related}, which presenting a formidable challenge in detecting backscatter symbols.
	
	Numerous traditional detectors have been developed for AmBC systems \cite{related_cor1,related_cor2,related_cor3,related_non1,related_non2,related_non3,related_non4,related_non5,related_non6,related_semi1,related_semi2,related_semi3,related_semi4,related_semi5,related_semi6,related_semi7}. In the traditional detectors, symbols are evaluated based on the statistical properties of the received signal. According to whether the detector possesses complete knowledge of ambient RF signal and channel state information (CSI), traditional detectors can be categorized into coherent detectors and non-coherent detectors\cite{10353962}. 
	
	A coherent detector leverages prior knowledge by utilizing comprehensive ambient signal information and CSI for symbol detection. Based on this, the authors of \cite{related_cor1} proposed the maximum a posterior probability (MAP) detector and the energy-threshold determination (ETD) detector for a multi-antenna AmBC system with M-ary frequency shift keying (MFSK) modulation. These detectors effectively counter jammer attacking. Ref.\cite{related_cor3} designed the MAP detector for AmBC system using on-off keying (OOK) modulation to achieve an error-floor-free detection performance. In \cite{related_cor2}, an adaptive dual-threshold detector was proposed for frequency diverse array based AmBC systems.  
	
	Non-coherent detectors operate without requiring full knowledge of the ambient signal and complete CSI. The authors of \cite{related_semi1} proposed maximum likelihood (ML) and energy-based detectors, which utilize incomplete CSI for the AmBC system with Gaussian signals and phase shift keying (PSK) signals. Ref.\cite{related_semi2} proposed the generalized likelihood ratio test (GLRT) detector for AmBC system without the knowledge of CSI. To further enhance the detection performance in AmBC systems, the authors of \cite{related_semi3} proposed a multi-antenna AmBC signal detector based on the maximum-eigenvalue of the received signal covariance matrix. This detector employed pilot sequences to estimate the statistical variances of the received signals. Ref.\cite{related_semi4} proposed an energy-based AmBC detector for complex RF signals including complex-valued Gaussian or phase shift keying (PSK) signals, and designed the statistical variances estimator of the received signals. Ref.\cite{related_semi5} proposed an efficient detector with interleaved coding and pilot sequences, which utilizes the complex ratio to preserve the phase information. Additionally, the authors of \cite{related_semi6} employed the eigenvalue decomposition of the received signal covariance matrix for CSI estimation. Ref.\cite{related_semi7} designed a CSI estimator based on a clustering method with pilot sequences.
	
	Nonetheless, the incorporation of pilot sequences results in a reduction in spectral efficiency. To address this issue, the authors of \cite{related_non1,related_non2,related_non3} employed the differential encoders in the BD of the AmBC systems. Specifically, in \cite{related_non1}, the authors proposed an energy-aware detector and provided an analytical characterization of its achievable bit error rate (BER) performance. In \cite{related_non2}, a data-driven estimator was designed to efficiently evaluate the statistical variances of the received signals to enhance the symbol detection performance. In \cite{related_non3}, an improved detector was developed to eliminate the assumption of equal probability for symbol bits. This improvement also removes the need for an estimation process, simplifying the detection mechanism. In addition, Ref.\cite{related_non6} proposed the expectation maximization (EM) based blind CSI estimator for the AmBC system with PSK ambient signals. The EM-based signal detection method was further developed in \cite{related_non5,related_non4}. Specifically, in \cite{related_non5}, the authors proposed three detectors for the AmBC system with multi-antenna BD, and the later two are blind detectors. Additionally, they proposed optimal tag antenna selection schemes to improve the detection performance. In \cite{related_non4}, the authors proposed an AmBC system with multiple BDs, and a non-coherent parallel detection algorithm was designed to detect the symbols without requiring CSI. However, these detectors often exhibit a high error floor. 

	Motivated by the above observations, in this paper, we design a blind symbol detector for AmBC systems. The detector does not rely on the knowledge of the ambient RF signal and CSI. The main contributions of this paper can be summarized as follows
	\begin{itemize}
		\item Firstly, we propose a blind symbol detector for the AmBC system based on the second largest eigenvalue of the received signal covariance matrix. In the blind detector, the intra-symbol differential amplitude shift keying (IDASK) is employed to mitigate the DLI. Furthermore, we design a noise variance estimator based on the impact of the received signal on the second largest eigenvalue, in order to improve the estimation accuracy.
		\item Secondly, we derive the close-form expressions of the missed detection probability and false alarm probability of the proposed blind detector. Furthermore, we evaluate a lower BER bound using the total variation theory.
		\item Finally, simulation results demonstrate that the proposed detector exhibits superior detection performance in AmBC systems compared to conventional counterparts.
	\end{itemize}

	The rest of the paper is organized as follows. In Section \ref{model and coding}, we introduce the AmBC system model with a signal antenna transmitter, an IDASK-based BD, and a multi-antenna receiver. In Section \ref{detector}, the blind symbol detector based on the second largest eigenvalue is presented. In Section \ref{analyse}, the theoretical missed detection probability and false alarm probability of the proposed blind detector are analyzed, followed by the analysis of the lower bound of BER. In Section \ref{simulation}, simulation experiments are given. In Section \ref{sum}, this paper is concluded.
	
	Throughout this paper, lowercase symbols represent scalars, while boldface symbols denote vectors or matrices. $\mathcal{C}\mathcal{N}(\mu ,{{\sigma }^{2}})$ refers to the circularly symmetric complex Gaussian distribution with mean $\mu $ and variance $\sigma^2$, whereas the real Gaussian distribution with mean $\mu $ and variance $\sigma^2 $ is denoted by $\mathcal{N}(\mu ,{{\sigma }^{2}})$. ${{\mathbf{I}}_{N}}$ stands for the identity matrix of order $N$. ${{\mathbf{S}}^{H}}$, $\text{tr}(\mathbf{S})$, $\text{rank}(\mathbf{S})$ and $\text{det}\left(\mathbf{S}\right)$ represent the conjugate transpose, trace, rank and determinant of the matrix $\mathbf{S}$, respectively. $\left\| \mathbf{w} \right\|$ denotes the Euclidean norm of the vector $\mathbf{w}$. ${{\mathbf{C}}^{M\times N}}$ represents a complex matrix with $M$ rows and $N$ columns. $\exp\left(\cdot \right)$ is the exponential function. $\mathcal{Q}( \cdot)$ denotes the Q-function. $\mathbb{Z}$ stands for the set of integers. 
	\begin{figure}[t]
		\centering
		\includegraphics[width=2.5in]{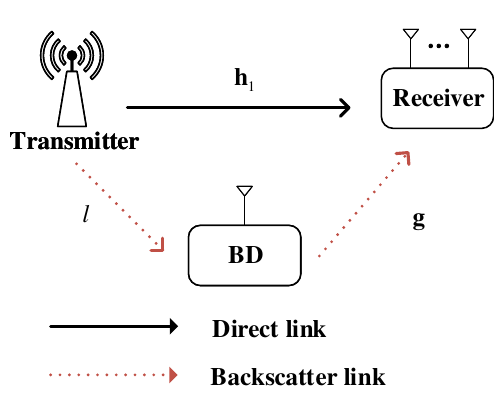}
		\caption{An AmBC system with a single-antenna RF transmitter, a single-antenna BD, and a multi-antenna receiver.}
		\label{fig1_1}
	\end{figure}
	\section{System Model} \label{model and coding}
	This paper delineates a three-node AmBC system comprising a single-antenna ambient RF transmitter, a single-antenna backscatter device (BD), and a receiver equipped with $M$ antennas. As illustrated in Fig. \ref{fig1_1}, the transmitter sends information to the receiver by modulating it onto the RF signal via an omnidirectional antenna. The BD is designed to modulate its binary symbols over the incident RF signal from the transmitter by manipulating its antenna impedance. Subsequently, the BD reflects the incident RF signal to the receiver based on the bit information. In this paper, we refer to the direct communication link between the transmitter and receiver as the direct link, and the cascade link between the transmitter (after passing BD) and the receiver as the backscatter link.
	\begin{figure*}[ht]
		\centering
		\includegraphics[width=7.2in]{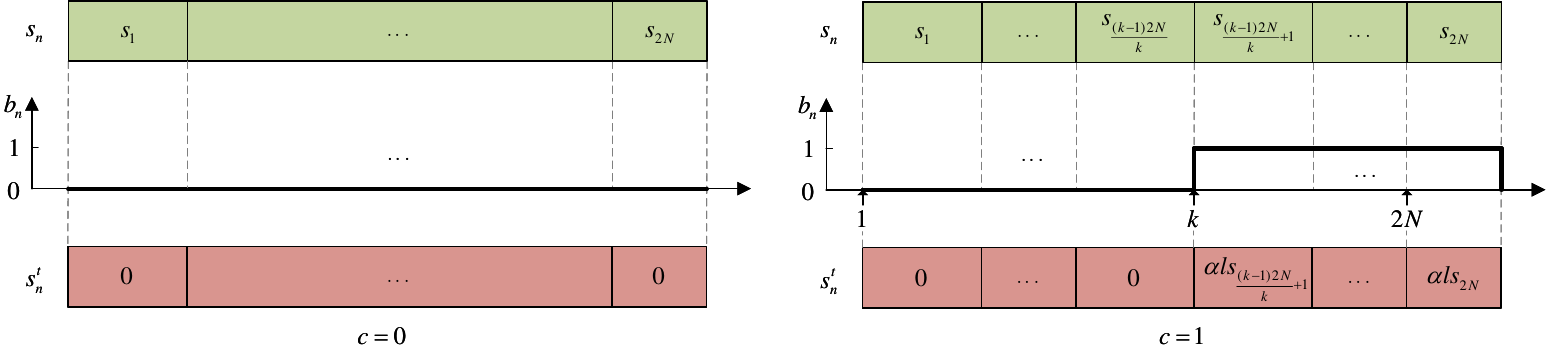}
		\caption{Schematic diagram of the IDASK scheme.}
		\label{fig1_2}
	\end{figure*}
	\subsection{IDASK-aided Signal Transmission}
	The energy of the reflected RF signal in the backscatter link is lower than that of the direct link. Thus, to capture the low-energy reflected RF signal, the BD has a lower rate than the transmitter\cite{related_semi7}. As illustrated in Fig. \ref{fig1_2}, at the top of next page, we assume that the period of one BD symbol is equal to that of $2N$ consecutive transmitter symbols. 
	
	We first assume that the channels are quasi-static block-fading according to \cite{related_semi5}. Through the transmitter-BD link, the RF signal $ \bar{s}_n $ received at the BD can be written as
	\begin{equation}
	\bar{s}_n = ls_n,
	\end{equation}
	where $n = 1,...,2N$, ${s_n}$ denotes the signal transmitted by the transmitter at time interval $n$ with variance $\sigma_s^2$, and $l$ denotes the channel coefficient of the transmitter-BD link.
	
	We then assume that the binary variables $b_n = 1 $ and $b_n = 0$ denote the reflection and non-reflection states of BD, respectively. Based on IDASK, the BD modulates the binary symbol $c \in \{0,1\}$ into the ambient RF signal by leveraging the two states of reflecting and non-reflecting its received RF signal $\bar{s}_n $. For $c = 0$, the BD remains in the non-reflection state during one BD symbol, that is, $b_n = 0$ for all $n=1, 2, \ldots, 2N$. On the contrary, for $c = 1$,
	we assume that the BD can freely switch its state between reflection and non-reflection per BD symbol. In this case, as depicted in Fig. \ref{fig1_2}, only a portion of the received RF signal in one BD symbol period is reflected. In this paper, the RF signal is not reflected in the first $(1- {1}/{k}) 2N$ transmitter symbols, that is, $b_n = 0$ for $n=1, 2, \ldots, (1- {1}/{k}) 2N $, and is reflected in the last $ {2N}/{k}$ transmitter symbols, that is, $b_n = 1$ for $n=1+(1- {1}/{k}) 2N, \ldots, 2N$. The parameter $1/k$ represents the ratio of $2N$ consecutive transmitter symbols for reflection and satisfies the condition of $2N/k \in \mathbb{Z}$ and $k \in [1,2N]$. In practice, the detection performance is independent on the order of the states of BD during one BD symbol period, it is solely contingent on the value of $k$. Thus, the BD of the IDASK-aided scheme at the BD is expressed as
	\begin{equation}
	{{b}_{n}}= \begin{cases}
	  0, & \begin{aligned}[t]
	     & n=1,...,2N, c=0, \\
	     &{\rm{or}}\ n=1,...,\left(1 - \frac{1}{k}\right)2N, c=1,
	     \end{aligned} \\
	  1, & n=\left( 1-\frac{1}{k} \right)2N+1,...,2N,c=1. \\ 
	\end{cases}
	\end{equation}
	
	Consequently, the RF signal reflected by BD can be expressed as
	\begin{equation}
	s_n^t = \alpha b_n\bar{s}_n,
	\end{equation}
	where $\alpha$ denotes the reflection coefficient of BD\cite{related_semi7}.
	\subsection{Received AmBC Signal}
	According to \cite{related_non3}, under the assumption of a short distance between the BD and receiver, the propagation delays of the direct and backscatter links are approximately equal. Consequently, the received RF signal is expressed as
	\begin{equation}
		{{\mathbf{y}}_{n}} = {{\mathbf{h}}_{1}}{{s}_{n}} + \mathbf{g} s_n^t + {{\mathbf{u}}_{n}} = \mathbf{y}_n ^b + \mathbf{y}_n ^d + {{\mathbf{u}}_{n}},
		\label{func2}
	\end{equation}
	where ${{\mathbf{h}}_{1}}\in {{\mathbf{C}}^{M\times 1}}$ and $\mathbf{g}\in {{\mathbf{C}}^{M\times 1}}$ denote the channel coefficients of the direct and BD-receiver links, respectively. We assume ${\bf{h}}_2=\alpha l{\bf{g}}$ denotes the channel coefficient of the backscatter link. Thus, we have $\mathbf{y}_n ^b = \mathbf{g}s_n ^t = b_n \mathbf{h}_2 s_n$ and $\mathbf{y}_n ^d =\mathbf{h}_1s_n $. In addition, ${{\mathbf{u}}_{n}}\sim \mathcal{C}\mathcal{N}(0,\sigma _{n}^{2}{{\mathbf{I}}_{M}})$ is the additive white Gaussian noise (AWGN) with variance $\sigma _{n}^{2}$.
	
	The hypotheses of BD sending symbols $c = 0$ and $c = 1$ are denoted by $\mathcal{H}_0$ and $\mathcal{H}_1$, respectively. The received RF signal $\mathbf{y}_n$ is expressed in matrix form as $\mathbf{Y} =[{{\mathbf{Y}}_{0}},{{\mathbf{Y}}_{1}}] \in \mathbf{C}^{M \times 2N}$, where ${{\mathbf{Y}}_{0}}=[{{\mathbf{y}}_{1}},{{\mathbf{y}}_{2}},...,{{\mathbf{y}}_{\left(1 - 1/k\right)2N}}]\in {{\mathbf{C}}^{M\times \left(1 - 1/k\right)2N}}$ and ${{\mathbf{Y}}_{1}}=[{{\mathbf{y}}_{\left(1 - 1/k\right)2N+1}},{{\mathbf{y}}_{N+2}},...,{{\mathbf{y}}_{2N}}]\in {{\mathbf{C}}^{M\times (2N/k)}}$. We assume that $s_n$ is independent and identically distributed, and ${{\mathbf{s}}_{n}}\sim \mathcal{C}\mathcal{N}(0,\sigma _{s}^{2})$. Then, under the hypothesis of ${{\mathcal{H}}_{i}}$, $\mathbf{Y}_{j} \sim \mathcal{C}\mathcal{N}(0,\mathbf{R}_{i}^{j})$, where $i,j \in \{0,1\}$, and we have
	\begin{align}
		& \mathbf{R}_{0}^{0}=\mathbb{E}[{{\mathbf{Y}}_{0}}{{\mathbf{Y}}_0^{H}}|{{\mathcal{H}}_{0}}]=\sigma _{s}^{2}{{\mathbf{h}}_1}\mathbf{h}_1^{H}+\sigma _{n}^{2}{{\mathbf{I}}_{M}} = \mathbf{R}_{1}^{0}=\mathbf{R}_{0}^{1}, \label{func3} \\
		& \mathbf{R}_{1}^{1}=\mathbb{E}[{{\mathbf{Y}}_{1}}{{\mathbf{Y}}_1^{H}}|{{\mathcal{H}}_{1}}]=\sigma _{s}^{2}{(\mathbf{h}_1 \! + \! \mathbf{h}_2)}{{(\mathbf{h}_1 \! + \! \mathbf{h}_2)}^{H}}\!+\!\sigma _{n}^{2}{{\mathbf{I}}_{M}}. \label{func4}
	\end{align}
	
	Then, the AmBC symbol detection process follows the binary hypothesis test as
	\begin{equation}
		\label{func8}
		\mathbf{Y} = \left[\mathbf{Y}_0,\mathbf{Y}_1\right] \sim \left\{ \begin{aligned}
			& \mathcal{C}\mathcal{N}(0,\mathbf{R}_{0}^{0})\cdot \mathcal{C}\mathcal{N}(0,\mathbf{R}_{0}^{1}),{{\mathcal{H}}_{0}}, \\ 
			& \mathcal{C}\mathcal{N}(0,\mathbf{R}_{1}^{0})\cdot \mathcal{C}\mathcal{N}(0,\mathbf{R}_{1}^{1}),{{\mathcal{H}}_{1}}.
		\end{aligned} \right.
	\end{equation}
	
	Based on \cite{ref23}, under the hypothesis of $\mathcal{H}_i$, the probability density function (PDF) of $\mathbf{Y}_j$ can be formulated as
	\begin{equation}\label{equ 9}
		f\left( {{\mathbf{Y}}_{j}};{{\mathcal{H}}_{i}} \right)=\frac{\exp \left\{ -\text{tr}\left( {{\left( \mathbf{R}_{i}^{j} \right)}^{-1}}{{\mathbf{Y}}_{j}}{{\mathbf{Y}}_j^{H}} \right) \right\}}{{{\pi }^{MN}}\det {{\left( \mathbf{R}_{i}^{j} \right)}^{N}}}.
	\end{equation}
	
	Based on \eqref{equ 9}, the PDF of $\mathbf{Y}$ can be derived as
	\begin{align}
		f\left( \mathbf{Y};{{\mathcal{H}}_{0}} \right) = & f\left( \mathbf{Y}_0;{{\mathcal{H}}_{0}} \right) \cdot f\left( \mathbf{Y}_1;{{\mathcal{H}}_{0}} \right) \nonumber\\
		 = & \frac{\exp \left\{ -\text{tr}\left( \left(\mathbf{R}_{0} ^0\right)^{-1}\mathbf{Y}{{\mathbf{Y}}^{H}} \right) \right\}}{{{\pi }^{2MN}}\det {{\left( {{\mathbf{R}}_{0}^0} \right)}^{2N}}}, \\
		f\left( \mathbf{Y};{{\mathcal{H}}_{1}} \right)= & \frac{\exp \left\{ -\text{tr}\left( {{\left( \mathbf{R}_{1}^{0} \right)}^{-1}}{{\mathbf{Y}}_{0}}{{\mathbf{Y}}_0^{H}}+{{\left( \mathbf{R}_{1}^{1} \right)}^{-1}}{{\mathbf{Y}}_{1}}{{\mathbf{Y}}_1^{H}} \right) \right\}}{{{\pi }^{2MN}}{{\left( \det \left( \mathbf{R}_{1}^{0} \right)\det \left( \mathbf{R}_{1}^{1} \right) \right)}^{N}}}.
	\end{align}
	
	Under the assumption of equal probabilities of symbols $c = 0$ and $c = 1$, according to \cite{related_cor3}, the general likelihood ratio test (GLRT) is utilized to detect the symbols as follows
	\begin{align}
	\label{func13}
		L\left( \mathbf{Y} \right) & =\frac{{ f\left( \mathbf{Y};{{\mathcal{H}}_{1}} \right)}}{{ f\left( \mathbf{Y};{{\mathcal{H}}_{0}} \right)}} \nonumber\\
		& = \frac{\exp \left\{ -\text{tr}\left( {{\left( \mathbf{R}_{1}^{1} \right)}^{-1}}{{\mathbf{Y}}_{1}}{{\mathbf{Y}}_1^{H}} \right) \right\}\det {{\left( \mathbf{R}_{0}^{1} \right)}^{N}}}{\exp \left\{ -\text{tr}\left( {{\left( \mathbf{R}_{0}^{1} \right)}^{-1}}{{\mathbf{Y}}_{1}}{{\mathbf{Y}}_1^{H}} \right) \right\}\det {{\left( \mathbf{R}_{1}^{1} \right)}^{N}}} \nonumber \\ & \underset{{{\mathcal{H}}_{0}}}{\overset{{{\mathcal{H}}_{1}}}{\mathop{\gtrless }}}\,1.
	\end{align}
	
	In practice, the complete information of the ambient RF signal variance, noise variance, and the CSI is unavailable. As a result, the GLRT-based detector cannot achieve the optimal symbol detection performance\cite{10353962}. Furthermore, in practical AmBC communication systems, the cascaded channel gain is considerably lower than the direct channel gain\cite{related_semi7}. Here, we define an average energy ratio $\Delta \gamma$ of the cascaded channel gain to the direct channel gain as follows
	\begin{equation}
	\Delta \gamma = \frac{\mathbb{E}\left[ {{\left\| {{\mathbf{h}}_{2}} \right\|}^{2}} \right]}{\mathbb{E}\left[ {{\left\| {{\mathbf{h}}_{1}} \right\|}^{2}} \right]}.
	\end{equation}
	As the value of $\Delta \gamma$ is close to zero, the direct link will result in significant interference to the backscatter link. To solve this problem, in the next section, we will propose a blind symbol detector based on the second largest eigenvalue of the received signal covariance matrix.
	
	\section{Second Largest Eigenvalue Based Blind Detector} \label{detector}
	In this section, we first design a blind detector based on the second largest eigenvalue of the covariance matrix of the received AmBC signal. Subsequently, a noise variance estimation algorithm is proposed.
	\begin{figure*}[hb]
	\centering
	\hrulefill 
	\vspace*{8pt}
	\begin{align}
	\label{func48}
		\det ({{\mathbf{R}}_{1}}) = &  \underbrace{\left( \sigma _{s}^{2}{{\left\| {{\mathbf{h}}_{1}} \right\|}^{2}}+\sigma _{n}^{2} \right)\left( \left( \frac{1}{k}-\frac{1}{{{k}^{2}}} \right)\sigma _{s}^{2}{{\left\| {{\mathbf{h}}_{2}} \right\|}^{2}}+\sigma _{n}^{2} \right) {{(\sigma _{n}^{2})}^{M-2}}}_{\rm{P1}} \nonumber \\ 
		& +  \underbrace{{{\left( \frac{1}{k} \right)}^{2}}\sigma _{s}^{2}\sigma _{n}^{M}{{\left\| {{\mathbf{h}}_{2}} \right\|}^{2}}+\frac{\sigma _{s}^{2}\sigma _{n}^{M}\left( \mathbf{h}_{1}^{H}{{\mathbf{h}}_{2}}+\mathbf{h}_{2}^{H}{{\mathbf{h}}_{1}} \right)}{k}-\frac{\left(k - 1\right) \sigma _{s}^{4}{{\left\| \mathbf{h}_{2}^{H}{{\mathbf{h}}_{1}} \right\|}^{2}}}{{{k}^{2}}}  {{(\sigma _{n}^{2})}^{M-2}}}_{\rm{P2}}.
		\tag{22}
	\end{align}
	\end{figure*}
	\subsection{Blind Symbol Detector}
	Based on \eqref{func3} and \eqref{func4}, the covariance matrix of the received AmBC signal can be written as
	\begin{align}\label{equ 14}
		{{\mathbf{R}}_{i}} & =\mathbb{E}[\mathbf{Y}{{\mathbf{Y}}^{H}}|{{\mathcal{H}}_{i}}] \nonumber\\
		 & = \frac{1}{k}\mathbb{E}[\mathbf{Y}_1{\mathbf{Y}_1^{H}}|{{\mathcal{H}}_{i}}] + \left(1 -\frac{1}{k}\right) \mathbb{E}[\mathbf{Y}_0{\mathbf{Y}_0^{H}}|{{\mathcal{H}}_{i}}].
	\end{align}
	
	For hypotheses of $\mathcal{H}_1$ and $\mathcal{H}_0$, Eq. \eqref{equ 14} is respectively expanded as
	\begin{align}
		{{\mathbf{R}}_{1}} = & \left(1 \! - \! \frac{1}{k} \right)\sigma _{s}^{2}{{\mathbf{h}}_1}{{\mathbf{h}} _1 ^{H}} \! + \! \frac{1}{k}\sigma _{s}^{2}{({\mathbf{h}} _ 1 \! + \! {\mathbf{h}} _ 2)}{({\mathbf{h}} _ 1 \! + \! {\mathbf{h}} _ 2)^{H}} \! + \! \sigma _{n}^{2}{{\mathbf{I}}_{M}}, \label{define_cov}\\
		{{\mathbf{R}}_{0}} = & \sigma _{s}^{2}{{\mathbf{h}}_1}\mathbf{h}_1^{H} + \sigma _{n}^{2}{{\mathbf{I}}_{M}}. \label{define_cov2}
	\end{align}
	
	Under hypothesis $\mathcal{H}_1$, we first derive the following theorem on $\mathbf{R}_1$.
	\begin{theorem}
		\label{the1}
		When $\Delta \gamma \rightarrow 0$, the determinant of $\mathbf{R}_1$ is approximated as
		\begin{align}
			\det ({{\mathbf{R}}_{1}}) \approx & \left( \sigma _{s}^{2}{{\left\| {{\mathbf{h}}_{1}} 	\right\|}^{2}}+\sigma _{n}^{2} \right) \left( \left( \frac{1}{k}-\frac{1}{{{k}^{2}}} \right)\sigma _{s}^{2}{{\left\| {{\mathbf{h}}_{2}} \right\|}^{2}}+\sigma _{n}^{2} \right)   \nonumber\\ 
			& \cdot{{(\sigma _{n}^{2})}^{M-2}}.
		\end{align}
	\end{theorem}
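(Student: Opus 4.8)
The plan is to reduce the $M\times M$ determinant to a $2\times2$ one via a rank-two factorization of $\mathbf{R}_1$, evaluate that $2\times2$ determinant exactly, and then discard the terms that are subdominant in the regime $\Delta\gamma\to0$.

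First, starting from \eqref{define_cov} I would expand $(\mathbf{h}_1+\mathbf{h}_2)(\mathbf{h}_1+\mathbf{h}_2)^H$ into its four rank-one pieces and collect coefficients; since $(1-\tfrac1k)+\tfrac1k=1$, the coefficient of $\mathbf{h}_1\mathbf{h}_1^H$ becomes $\sigma_s^2$, so that
\[
\mathbf{R}_1=\sigma_n^2\mathbf{I}_M+\mathbf{V}\mathbf{M}\mathbf{V}^H,\quad \mathbf{V}=[\mathbf{h}_1,\ \mathbf{h}_2]\in\mathbf{C}^{M\times2},\quad \mathbf{M}=\sigma_s^2\begin{bmatrix}1&1/k\\ 1/k&1/k\end{bmatrix}.
\]
Applying the Sylvester/Weinstein--Aronszajn identity $\det(\mathbf{I}_M+\mathbf{A}\mathbf{B})=\det(\mathbf{I}_2+\mathbf{B}\mathbf{A})$ with $\mathbf{A}=\mathbf{V}$ and $\mathbf{B}=\sigma_n^{-2}\mathbf{M}\mathbf{V}^H$ then gives
\[
\det(\mathbf{R}_1)=(\sigma_n^2)^{M}\det\!\Big(\mathbf{I}_M+\tfrac{1}{\sigma_n^2}\mathbf{V}\mathbf{M}\mathbf{V}^H\Big)=(\sigma_n^2)^{M-2}\det\!\big(\sigma_n^2\mathbf{I}_2+\mathbf{M}\mathbf{G}\big),
\]
where $\mathbf{G}=\mathbf{V}^H\mathbf{V}$ is the Gram matrix with entries $\|\mathbf{h}_1\|^2$, $\mathbf{h}_1^H\mathbf{h}_2$, $\mathbf{h}_2^H\mathbf{h}_1$, $\|\mathbf{h}_2\|^2$; this step needs no assumption on $\mathbf{h}_1,\mathbf{h}_2$.

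Second, I would form $\mathbf{M}\mathbf{G}$, add $\sigma_n^2\mathbf{I}_2$, and expand the $2\times2$ determinant. The contributions linear in $\mathbf{h}_1^H\mathbf{h}_2$ produced by the product of the diagonal entries cancel those from the product of the off-diagonal entries; regrouping the surviving $\|\mathbf{h}_2\|^2$ and $\sigma_n^2$ terms into the factored form $(\sigma_s^2\|\mathbf{h}_1\|^2+\sigma_n^2)\big((\tfrac1k-\tfrac1{k^2})\sigma_s^2\|\mathbf{h}_2\|^2+\sigma_n^2\big)$ reproduces exactly the identity \eqref{func48}, i.e.\ $\det(\mathbf{R}_1)=\mathrm{P1}+\mathrm{P2}$ with $\mathrm{P1}$ the expression in the theorem and $\mathrm{P2}$ the residual gathering the cross terms proportional to $\mathbf{h}_1^H\mathbf{h}_2+\mathbf{h}_2^H\mathbf{h}_1$, to $\|\mathbf{h}_2\|^2$, and to $\|\mathbf{h}_2^H\mathbf{h}_1\|^2$, all multiplied by $(\sigma_n^2)^{M-2}$.

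Finally, I would show $\mathrm{P2}$ is negligible relative to $\mathrm{P1}$. Since $\mathrm{P1}$ already contains the $\mathbf{h}_2$-free summand $(\sigma_s^2\|\mathbf{h}_1\|^2+\sigma_n^2)(\sigma_n^2)^{M-1}=\Theta(\|\mathbf{h}_1\|^2)$, whereas Cauchy--Schwarz $|\mathbf{h}_1^H\mathbf{h}_2|\le\|\mathbf{h}_1\|\,\|\mathbf{h}_2\|$ bounds the three summands of $\mathrm{P2}$ by $O(\|\mathbf{h}_1\|\,\|\mathbf{h}_2\|)$, $O(\|\mathbf{h}_2\|^2)$ and $O(\|\mathbf{h}_1\|^2\|\mathbf{h}_2\|^2)$ respectively, the ratio $\mathrm{P2}/\mathrm{P1}$ vanishes as $\Delta\gamma=\mathbb{E}[\|\mathbf{h}_2\|^2]/\mathbb{E}[\|\mathbf{h}_1\|^2]\to0$; dropping $\mathrm{P2}$ leaves precisely the claimed approximation. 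I expect the delicate point to be this last estimate rather than the algebra: one must verify that $\mathrm{P1}$ genuinely dominates the \emph{first-order} cross term $\sigma_s^2\sigma_n^2k^{-1}(\mathbf{h}_1^H\mathbf{h}_2+\mathbf{h}_2^H\mathbf{h}_1)$, which is only linear (not quadratic) in $\|\mathbf{h}_2\|$, and to state precisely in what sense a limit on the \emph{averaged} ratio $\Delta\gamma$ controls a per-realization determinant --- the cleanest reading being to hold $\|\mathbf{h}_1\|$ fixed and let $\|\mathbf{h}_2\|\to0$.
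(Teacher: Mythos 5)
Your proposal is correct, and it reaches the paper's exact intermediate identity \eqref{func48} by a genuinely different algebraic route. The paper borders $\mathbf{R}_1$ into the $(M+2)\times(M+2)$ matrix \eqref{const_matrix1} and performs block elimination with the triangular factors $\mathbf{A}$ and $\mathbf{B}$ to read off $\det(\mathbf{R}_1)$ from \eqref{tansform_matrix_final}; you instead write $\mathbf{R}_1=\sigma_n^2\mathbf{I}_M+\mathbf{V}\mathbf{M}\mathbf{V}^H$ with $\mathbf{V}=[\mathbf{h}_1,\mathbf{h}_2]$ and $\mathbf{M}=\sigma_s^2\begin{bmatrix}1&1/k\\1/k&1/k\end{bmatrix}$ and invoke $\det(\mathbf{I}_M+\mathbf{A}\mathbf{B})=\det(\mathbf{I}_2+\mathbf{B}\mathbf{A})$, reducing everything to a $2\times2$ determinant in the Gram matrix $\mathbf{V}^H\mathbf{V}$. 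The two computations are equivalent (the paper's bordering is a hand-rolled matrix-determinant-lemma argument), but yours is shorter, makes the rank-two structure of the signal part explicit, and incidentally cleans up the notation in \eqref{func48}, where the factors written as $\sigma_n^M$ in P2 are really $\sigma_n^2$ multiplying the common $(\sigma_n^2)^{M-2}$. Your expansion of the $2\times2$ determinant correctly shows that the $\sigma_s^4$-level terms linear in $\mathbf{h}_1^H\mathbf{h}_2$ cancel while the $\sigma_n^2\sigma_s^2(\mathbf{h}_1^H\mathbf{h}_2+\mathbf{h}_2^H\mathbf{h}_1)/k$ term survives into P2, matching the paper. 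For the final step, the paper simply asserts that P2 is negligible when $\Delta\gamma\to 0$ and $\mathbf{h}_1$, $\mathbf{h}_2$ are independent; your treatment is more careful: bounding the three summands of P2 via Cauchy--Schwarz and noting that P1 contains the $\mathbf{h}_2$-free summand $(\sigma_s^2\|\mathbf{h}_1\|^2+\sigma_n^2)(\sigma_n^2)^{M-1}$, so that $\mathrm{P2}/\mathrm{P1}\to 0$ when $\|\mathbf{h}_1\|$ is held fixed and $\|\mathbf{h}_2\|\to 0$, which is a legitimate per-realization reading of the averaged condition $\Delta\gamma\to 0$ (the paper's appeal to independence plays the analogous role of keeping $\mathbf{h}_1^H\mathbf{h}_2$ small). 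Your closing caveat is well placed but does not leave a gap for the theorem as stated, since the claim concerns the relative error of the full determinant, not of its $\|\mathbf{h}_2\|^2$-dependent factor alone.
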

	\begin{proof}
	Under hypothesis $\mathcal{H}_1$, the following matrix is first constructed
	\begin{equation}\label{const_matrix1}
		{{\mathbf{\bar{R}}}_{1}}=\left[ \begin{matrix}
			1 & 0 & {({\mathbf{h}}_1 + {\mathbf{h}}_2) ^ H}  \\
			0 & 1 & {{\mathbf{h}} _ 1 ^ H}  \\
			\mathbf{0}_{M\times 1} & \mathbf{0}_{M\times 1} & {{\mathbf{R}}_{1}}
		\end{matrix} \right].
	\end{equation}
	
	It is inferred from \eqref{define_cov} and \eqref{const_matrix1} that $\det ({{\mathbf{\bar{R}}}_{1}})=\det ({{\mathbf{R}}_{1}})$. Then, multiplying the following lower triangular matrices
	\begin{equation}
	\mathbf{A} = \left[ \begin{matrix}
	   1 & 0 & \mathbf{0}_{1\times M}  \\
	   0 & 1 & \mathbf{0}_{1\times M}  \\
	   -\frac{\sigma _{s}^{2}({{\mathbf{h}}_{1}}+{{\mathbf{h}}_{2}})}{k} & -\frac{(k-1)\sigma _{s}^{2}{{\mathbf{h}}_{1}}}{k} & {{\mathbf{I}}_{M}}  \\
	\end{matrix} \right]
	\end{equation}
	and
	\begin{equation}
	\mathbf{B} = \left[ \begin{matrix}
		1 & 0 & -\frac{{{({{\mathbf{h}}_{1}}+{{\mathbf{h}}_{2}})}^{H}}}{\sigma_n^2}  \\
		0 & 1 & -\frac{\mathbf{h}_{1}^{H}}{\sigma_n^2}  \\
		\mathbf{0}_{M\times 1} & \mathbf{0}_{M\times 1} & {{\mathbf{I}}_{M}}  \\
	\end{matrix} \right]
	\end{equation}
	by ${\mathbf{\bar{R}}}_{1}$, namely, $\mathbf{B}\mathbf{A}{{\mathbf{\bar{R}}}_{1}}$, yields
	\begin{equation}
	\label{tansform_matrix_final}
	\begin{aligned}
	\left[ \begin{matrix}
			\frac{\sigma _{s}^{2}{{\left\| {({\mathbf{h}} _ 1 + {\mathbf{h}} _ 2)} \right\|}^{2}}}{k\sigma _{n}^{2}}+1 & \frac{(k - 1)\sigma _{s}^{2}{({\mathbf{h}} _ 1 + {\mathbf{h}} _ 2)^{H}}{{\mathbf{h}}_1}}{k\sigma _{n}^{2}} & \mathbf{0}_{1\times M}  \\
			\frac{\sigma _{s}^{2}{{\mathbf{h}}_1^{H}}{({\mathbf{h}} _ 1 + {\mathbf{h}} _ 2)}}{k\sigma _{n}^{2}} & \frac{(k - 1)\sigma _{s}^{2}{{\left\| {{\mathbf{h}} _1} \right\|}^{2}}}{k\sigma _{n}^{2}}+1 & \mathbf{0}_{1\times M}  \\
			- \frac{\sigma _{s}^{2} {({\mathbf{h}}_1 + {\mathbf{h}}_2)}}{k} & -\frac{(k - 1)\sigma _{s}^{2}{{\mathbf{h}} _ 1}}{k} & \sigma _{n}^{2}{{\mathbf{I}}_{M}}  \\
		\end{matrix} \right].
	\end{aligned}
	\end{equation}
	
	Based on the transformations of $\mathbf{R}_1$ from \eqref{define_cov} to \eqref{tansform_matrix_final}, the determinant of $\mathbf{R}_1$ is obtained in \eqref{func48}, at the bottom of this page. When $\Delta \gamma$ approaches zero and $\mathbf{h}_1$ and $\mathbf{h}_2$ are independent of each other, the value of P2 in (\ref{func48}) is significantly smaller than that of P1. Consequently, Theorem \ref{the1} is proved.
	\end{proof}

	Without loss of generality, we first assume that ${{l}_{1}}\ge {{l}_{2}}\ge ...\ge {{l}_{M}}$ and ${{\lambda }_{1}}\ge {{\lambda }_{2}}\ge ...\ge {{\lambda }_{M}}$ are the eigenvalues of ${{\mathbf{R}_1}}$ and the received signal covariance matrix ${\mathbf{\hat{R}}}$, which can be expressed as
	
	\setcounter{equation}{22}

	\begin{equation}\label{sample-cov}
	{{\mathbf{\hat{R}}}} = \frac{\mathbf{Y}{{\mathbf{Y}}^{H}}}{2N}.
	\end{equation}
	
	As $\Delta \gamma$ approaches zero, according to Theorem \ref{the1}, we have 
	\begin{align}
	{{l}_{1}} & =\sigma _{s}^{2}{{\left\| {{\mathbf{h}}_{1}} \right\|}^{2}}+\sigma _{n}^{2}, \label{hyper_H1_l1}\\
	{{l}_{2}} & =\sigma _{s}^{2}{{\left\| {{\mathbf{h}}_{2}} \right\|}^{2}}\left(\frac{1}{k} - \frac{1}{k^2}\right)+\sigma _{n}^{2}, \label{hyper_H1_l2}\\
	l_3 & = ... = l_M = \sigma_n ^2. \label{hyper_H1_l3}
	\end{align}

	Eqs. \eqref{hyper_H1_l1} and \eqref{hyper_H1_l2} reveal that IDASK, i.e., $k >1$, is beneficial for redirecting the energy of the received signal in the backscatter link towards the second largest eigenvalue of $\mathbf{R}_1$. Thus, the DLI corresponding to the largest eigenvalue can be eliminated from the received AmBC signal. On the contrary, when $k=1$, IDASK becomes the conventional OOK in \cite{related_semi3}.
	
	Under hypothesis $\mathcal{H}_0$, we have $\det ({{\mathbf{R}}_{0}})=(\sigma _{s}^{2}{{\left\| {{\mathbf{h}}_1} \right\|}^{2}}+\sigma _{n}^{2}){{(\sigma _{n}^{2})}^{M-1}}$\cite{ref23}, and
	\begin{align}
	l_1 & = \sigma _{s}^{2}{{\left\| {{\mathbf{h}}_1} \right\|}^{2}}+\sigma _{n}^{2}, \label{hyper_H0_l1} \\
	{{l}_{2}} & =...={{l}_{M}}=\sigma _{n}^{2}. \label{hyper_H0_l2}
	\end{align}
	
	It can be seen that the second largest eigenvalue of $\mathbf{R}_0$ is only related to noise, while the left eigenvalues are identical to those obtained in hypothesis $\mathcal{H}_1$. Therefore, under hypotheses $\mathcal{H}_0$ and $\mathcal{H}_1$, the second largest eigenvalue of the covariance matrix of the received signal can be served as a key indicator for the detection of reflected signals through the BD. Based on this, we design the following detector
	\begin{equation}
		\label{func7}
		T(\mathbf{Y})=\frac{{{\lambda }_{2}}}{\sigma _{n}^{2}}\underset{{{\mathcal{H}}_{0}}}{\overset{{{\mathcal{H}}_{1}}}{\mathop{\gtrless }}}\,\eta ,
	\end{equation}
	where $\eta $ is a given threshold. To conduct the second largest eigenvalue-based symbol detection, the noise variance should be estimated in advance.
	\subsection{Estimation of Noise Variance}
	Under hypothesis $\mathcal{H}_0$, according to \eqref{hyper_H0_l2}, the noise variance is estimated as
	\begin{equation}
		\label{func9}
		\hat{\sigma }_{n}^{2}=\frac{\text{tr}(\mathbf{\hat{R}})-{{\lambda }_{1}}}{M-1},{{\mathcal{H}}_{0}}.
	\end{equation}
	
	 Under hypothesis $\mathcal{H}_1$, according to \eqref{hyper_H1_l3}, the noise variance can be estimated as
	\begin{equation}
		\label{func10}
		\hat{\sigma }_{n}^{2}=\frac{\text{tr}(\mathbf{\hat{R}})-{{\lambda }_{1}}-{{\lambda }_{2}}}{M-2},{{\mathcal{H}}_{1}}.
	\end{equation}
	
	It can be observed from \eqref{func9} and \eqref{func10} that due to the effect of the backscatter link, different expressions are used to estimate the noise variance. In practical blind detection, to accurately evaluate the noise variance, the reflection state of the BD should be first determined. Fortunately, considering the higher second largest value in the reflection state compared to the non-reflection state, ${{\lambda }_{2}}-{{\lambda }_{m}} $ $(m=3,4,\ldots,M)$ will be small in hypothesis $\mathcal{H}_0$, and will be a larger value compared to the noise variance in hypothesis $\mathcal{H}_1$. Thus, the estimation of the noise variance is devised as

	\begin{equation}
	\label{modified_noise}
		\hat{\sigma }_{n}^{2}=\left\{ \begin{aligned}
			& \frac{\text{tr}(\mathbf{\hat{R}})-{{\lambda }_{1}}}{M-1},{{\lambda }_{2}}-{{\lambda }_{m}}<\frac{\text{tr}(\mathbf{\hat{R}})-{{\lambda }_{1}}-{{\lambda }_{2}}}{M-2}, \\ 
			& \frac{\text{tr}(\mathbf{\hat{R}}) \! - \! {{\lambda }_{1}} \! - \! {{\lambda }_{2}}}{M-2},{{\lambda }_{2}}-{{\lambda }_{m}}\ge \frac{\text{tr}(\mathbf{\hat{R}})-{{\lambda }_{1}}-{{\lambda }_{2}}}{M-2}.
		\end{aligned} \right.
	\end{equation}
	
	Consequently, based on \eqref{modified_noise}, the blind detector in \eqref{func7} is improved as
	\begin{equation}
		T(\mathbf{Y}) \! = \! \left\{ \begin{aligned}
			& \frac{{{\lambda }_{2}}\left( M-1 \right)}{\text{tr}( {\mathbf{\hat{R}}} )-{{\lambda }_{1}}},{{\lambda }_{2}}-{{\lambda }_{m}}<\frac{\text{tr}( {\mathbf{\hat{R}}} ) \! - \! {{\lambda }_{1}} \! - \! {{\lambda }_{2}}}{M-2}, \\ 
			& \frac{{{\lambda }_{2}}\left( M-2 \right)}{\text{tr}( {\mathbf{\hat{R}}} ) \!- \! {{\lambda }_{1}} \! - \! {{\lambda }_{2}}},{{\lambda }_{2}} \! - \! {{\lambda }_{m}}\ge \frac{\text{tr}( {\mathbf{\hat{R}}} ) \! - \! {{\lambda }_{1}} \! - \! {{\lambda }_{2}}}{M-2}.
		\end{aligned} \right.
	\end{equation}
	\section{Analytical Performance Evaluation} \label{analyse}
	In this section, we evaluate the detection performance of the proposed blind detector in terms of the false alarm probability, the missed detection probability, and the lower BER bound.	
	
	\subsection{False Alarm Probability}
	Under hypothesis $\mathcal{H}_0$, when threshold $\eta$ is much lower than the value of $\lambda _2 / \sigma^2_n$, the received AmBC signal is susceptible to being erroneously detected as a signal with the reflected RF signal, resulting in a false alarm. To quantify the impact of the threshold on the false alarm, we analyze the false alarm probability. Firstly, we define the ratio of the transmitted RF signal variance to the channel noise variance as $\gamma = \sigma^2_{s}/\sigma^2_{n}$. Then, under the assumption of $N \gg M$, the following lemma is proposed to demonstrate the distribution of $\lambda _2 / \sigma^2_n$.
	\begin{lemma}
		\label{lem1}
		When $N \gg M$ and $\gamma$ is large, the second largest eigenvalue $\lambda _2$ normalized by $\sigma^2_n$ follows
		\begin{equation}
			\label{func19}
			\frac{{{\lambda }_{2}}/\sigma _{n}^{2}-{{\mu }_{N,M-1}}}{{{\sigma }_{N,M-1}}}\sim T{{W}_{2}},
		\end{equation}
		where $TW_2$ denotes the Tracy-Widom distribution of order 2\cite{ref26}, and
		\begin{align}
			& {{\mu }_{N,M-1}}={{\left( 1+\sqrt{\frac{M-1}{2N}} \right)}^{2}}, \\ 
			& {{\sigma }_{N,M-1}} \! = \! \frac{1}{\sqrt{2N}}\left(1 \! + \! \sqrt{\frac{M-1}{2N}} \right){{\left( \frac{1}{\sqrt{2N}} \! + \! \frac{1}{\sqrt{M-1}} \right)}^{1/3}}. 
		\end{align}
	\end{lemma}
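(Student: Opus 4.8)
The plan is to reduce $\lambda_2$ under $\mathcal{H}_0$ to the largest eigenvalue of a white complex Wishart matrix of size $(M-1)\times(M-1)$ built from $2N$ snapshots, and then to invoke the complex Tracy--Widom limit cited in the statement.

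First I would exploit the rank-one structure of $\mathbf{R}_0$. Under $\mathcal{H}_0$ we have $\mathbf{Y}=\mathbf{h}_1\mathbf{s}^{T}+\mathbf{U}$ with $\mathbf{s}=[s_1,\dots,s_{2N}]^{T}$; passing to an orthonormal basis whose first vector is $\mathbf{v}=\mathbf{h}_1/\|\mathbf{h}_1\|$ turns $\hat{\mathbf{R}}=\mathbf{Y}\mathbf{Y}^{H}/(2N)$ into a bordered matrix whose scalar corner equals $\mathbf{v}^{H}\hat{\mathbf{R}}\mathbf{v}=\|\mathbf{h}_1\|^{2}\|\mathbf{s}\|^{2}/(2N)+o(\gamma)=\Theta(\gamma)$, whose off-diagonal block has norm of strictly smaller order than the corner, and whose $(M-1)\times(M-1)$ trailing principal block equals $\tilde{\mathbf{U}}\tilde{\mathbf{U}}^{H}/(2N)$, where $\tilde{\mathbf{U}}\in\mathbf{C}^{(M-1)\times 2N}$ collects the coordinates of $\mathbf{U}$ in an orthonormal basis of $\mathbf{v}^{\perp}$ and therefore has i.i.d.\ $\mathcal{C}\mathcal{N}(0,\sigma_n^{2})$ entries. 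Because the nonzero eigenvalues of $\tilde{\mathbf{U}}\tilde{\mathbf{U}}^{H}$ coincide with the nonzero eigenvalues of the noise projected onto $\mathbf{v}^{\perp}$, this reduction loses nothing.

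Second I would decouple the spike from the bulk. Writing the eigenvalues of the bordered matrix through the Schur complement of its corner, the leading eigenvalue $\lambda_1=\Theta(\gamma)$ absorbs the signal, whereas $\lambda_2,\dots,\lambda_M$ equal the eigenvalues of $\tilde{\mathbf{U}}\tilde{\mathbf{U}}^{H}/(2N)$ perturbed by a rank-one correction of operator norm $O((M/N)\sigma_n^{2})$. The trailing block concentrates near the right edge $\sigma_n^{2}(1+\sqrt{(M-1)/(2N)})^{2}=\Theta(\sigma_n^{2})$ of its limiting spectral distribution, with a Tracy--Widom window of width $\Theta(\sigma_n^{2}N^{-1/2})$; since $\gamma$ is large and $N\gg M$ (so that $M/N=o(N^{-1/2})$), Weyl's inequality then yields $\lambda_2=\lambda_{\max}\!\big(\tilde{\mathbf{U}}\tilde{\mathbf{U}}^{H}\big)/(2N)+o\!\big(\sigma_n^{2}\sigma_{N,M-1}\big)$. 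This estimate is the main obstacle: Cauchy interlacing only sandwiches $\lambda_2$ between the two largest eigenvalues of the trailing block, an interval which is itself of Tracy--Widom order, so the Schur-complement refinement together with the large-$\gamma$ scaling is genuinely needed to pin $\lambda_2$ to the largest one.

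Finally I would apply the order-$2$ (complex) Tracy--Widom theorem \cite{ref26} to the white complex Wishart matrix $\tilde{\mathbf{U}}\tilde{\mathbf{U}}^{H}/\sigma_n^{2}$ with parameters $p=M-1$ and $n=2N$: its largest eigenvalue, centered by $(\sqrt{2N}+\sqrt{M-1})^{2}$ and scaled by $(\sqrt{2N}+\sqrt{M-1})(1/\sqrt{2N}+1/\sqrt{M-1})^{1/3}$, converges in distribution to $T{{W}_{2}}$. Dividing the centering and scaling constants by $2N$ produces exactly $\mu_{N,M-1}=(1+\sqrt{(M-1)/(2N)})^{2}$ and $\sigma_{N,M-1}=\tfrac{1}{\sqrt{2N}}(1+\sqrt{(M-1)/(2N)})(1/\sqrt{2N}+1/\sqrt{M-1})^{1/3}$, and substituting the decoupling estimate of the previous step gives $(\lambda_2/\sigma_n^{2}-\mu_{N,M-1})/\sigma_{N,M-1}\sim T{{W}_{2}}$, which is the claim.
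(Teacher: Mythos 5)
Your proposal is correct in substance and lands on exactly the constants of the lemma, but it reaches them by a different decoupling mechanism than the paper. The paper never rotates the data: it constructs an auxiliary $(M-1)$-antenna binary test in which $\bar{\mathbf{R}}_0=\sigma_s^2\bar{\mathbf{h}}_1\bar{\mathbf{h}}_1^H+\sigma_n^2\mathbf{I}_{M-1}$ (signal present) and $\bar{\mathbf{R}}_1=\sigma_n^2\mathbf{I}_{M-1}$ (noise only), identifies $\lambda_1/\sigma_n^2$ and $\lambda_2/\sigma_n^2$ with the max and min of the two resulting largest eigenvalues, and then uses an overlapping-coefficient (OVL) argument --- the Gaussian law of the signal-bearing spike versus the Tracy--Widom law at the noise edge --- to show that for $N\gg M$ and large $\gamma$ the overlap vanishes, so the min (hence $\lambda_2$) inherits the $TW_2$ law with the stated $\mu_{N,M-1}$, $\sigma_{N,M-1}$. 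You instead justify the same split directly on the sample covariance: rotating onto $\mathbf{h}_1/\|\mathbf{h}_1\|$ makes the trailing $(M-1)\times(M-1)$ block an exactly white Wishart with $2N$ snapshots, and Cauchy interlacing plus the Schur-complement perturbation bound pins $\lambda_2$ to its largest eigenvalue once the corner is $\Theta(\gamma)$. Your route is arguably the cleaner justification of the paper's heuristic max/min identification (its Eqs.\ (53)--(54) are asserted, not derived), and it makes explicit where large $\gamma$ and $N\gg M$ enter; the paper's OVL route avoids any perturbation estimate and reuses machinery it needs anyway for Lemma 2. One small imprecision on your side: the Tracy--Widom window is of width $\Theta\left(\sigma_n^2 N^{-1/2}(M-1)^{-1/6}\right)$ rather than $\Theta\left(\sigma_n^2 N^{-1/2}\right)$, so the negligibility of your $O\left((M/N)\sigma_n^2\right)$ correction requires a slightly stronger scaling of $M$ versus $N$ than the one you quote --- harmless at the paper's level of asymptotic rigor, but worth stating if you want the estimate to be tight.
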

	\begin{proof}
	See Appendix \ref{appen 1}.
	\end{proof}

	Then, according to Lemma \ref{lem1}, the false alarm probability $P_{\rm{fa}}$ is written as
		\begin{align}
		\label{theory_Pfa}
			{{P}_{\rm{fa}}} & =P\left[ T(\mathbf{Y})>\eta |{{\mathcal{H}}_{\text{0}}} \right] \nonumber\\ 
			& =P\left[ \frac{{{\lambda }_{2}}/\sigma _{n}^{2}-{{\mu }_{N,M-1}}}{{{\sigma }_{N,M-1}}} > \left. \frac{\eta -{{\mu }_{N,M-1}}}{{{\sigma }_{N,M-1}}}  \right| {{\mathcal{H}}_{0}}  \right] \nonumber\\ 
			& =1-{{F}_{TW2}}\left( \frac{\eta -{{\mu }_{N,M-1}}}{{{\sigma }_{N,M-1}}} \right), 
		\end{align}
	where ${{F}_{TW2}}\left( \cdot  \right)$ denotes the cumulative distribution function (CDF) of the Tracy-Widom distribution of order 2.

	For a given $P_{\rm{fa}}$, the threshold is derived as
	\begin{equation}
		\label{func20}
		\eta ={{\mu }_{N,M-1}}+{{\sigma }_{N,M-1}}F_{TW2}^{-1}(1-{{P}_{\rm{fa}}}),
	\end{equation}
	where ${{F}_{TW2}^{-1}}\left( \cdot  \right)$ denotes the inverse function of ${{F}_{TW2}}\left( \cdot  \right)$.
	
	Fig. \ref{P_fa_vs_MN} illustrates the analytical false alarm probability $P_{\rm{fa}}$ in \eqref{theory_Pfa} with different values of $M$, $N$, and $\eta$. It shows that an increase in $N$ when $N \gg M$, or an increase in the threshold value of $\eta$ can result in a notable decrease in the false alarm probability $P_{\rm{fa}}$. 
	
	\begin{figure}[h]
		\centering
		\includegraphics[width=3.5in]{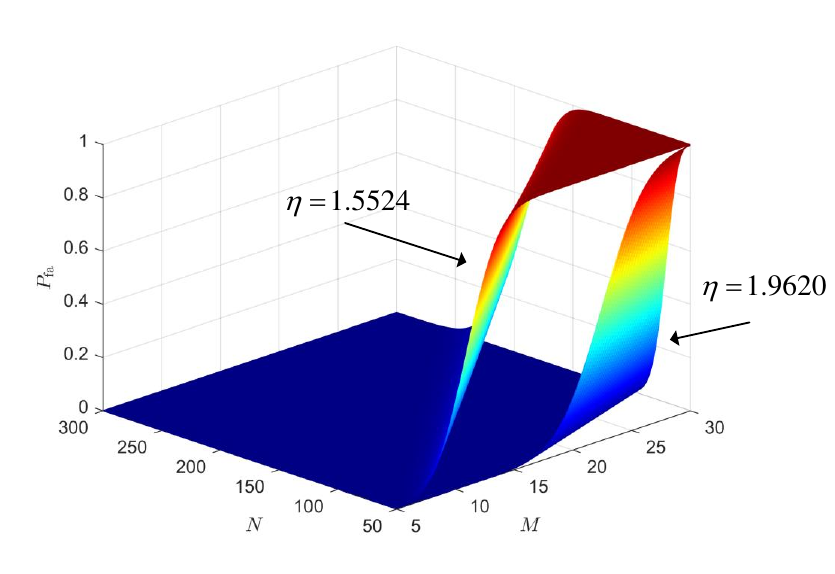}
		\caption{Analytical false alarm probability with varying values of $M$, $N$, and $\eta$, where $M = 5, 6, \ldots, 30 $, $N = 50, 60, \ldots, 300$, and $\eta = 1.5524,1.962$.}
		\label{P_fa_vs_MN}
	\end{figure}
	\subsection{Missed Detection Probability}
	Under hypothesis ${{\mathcal{H}}_{1}}$, when threshold $\eta$ is much larger than the value of $\lambda _2 / \sigma^2_n$, the reflected RF signal is susceptible to being missed. Thus, we analyze the missed detection probability about the proposed blind detector. Firstly, according to Theorem \ref{the1}, \eqref{hyper_H1_l1} and \eqref{hyper_H1_l2}, the distribution of $\lambda _2 $ is given in the following lemma.
	\begin{lemma}
		\label{lem2}
		When $N \gg M$, $\Delta \gamma$ is low, and $\gamma {{\left\| {{\mathbf{h}}_{2}} \right\|}^{2}} \ge 1$, the distribution of $\lambda_{2} / \sigma _n ^2$ follows
		\begin{equation}
			\label{func23}
			\frac{{{\lambda }_{2}}}{\sigma _{n}^{2}}\sim \mathcal{N}\left( (1+{{\gamma }_{1}})\left( 1+\frac{M-2}{2N{{\gamma }_{1}}} \right),\frac{{{(1+{{\gamma }_{1}})}^{2}}}{2N} \right),
		\end{equation}
		where ${{\gamma }_{1}}={{\left\| {{\mathbf{h}}_{2}} \right\|}^{2}} \left(1/k - 1/k^2\right) \gamma$ .
	\end{lemma}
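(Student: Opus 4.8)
\emph{Step 1 (population structure).} The plan is to recognise $\lambda_2$ as the sample eigenvalue generated by the \emph{second} population spike of $\mathbf{R}_1$ and then to apply the central limit theorem for spiked sample covariance eigenvalues. By Theorem~\ref{the1} together with \eqref{hyper_H1_l1}, \eqref{hyper_H1_l2} and \eqref{hyper_H1_l3}, when $\Delta\gamma$ is small the population covariance $\mathbf{R}_1$ has the spiked form $\{\,l_1,\,l_2,\,\sigma_n^2,\dots,\sigma_n^2\,\}$: a dominant spike $l_1$ carrying the DLI, a sub-dominant spike $l_2=\sigma_n^2(1+\gamma_1)$ aligned (to leading order) with $\mathbf{h}_2$, and an $(M-2)$-dimensional noise floor at level $\sigma_n^2$. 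Since $\mathbb{E}[\hat{\mathbf{R}}\mid\mathcal{H}_1]=\mathbf{R}_1$ and, as $\Delta\gamma\to 0$, the per-snapshot covariances $\mathbf{R}_1^0$ and $\mathbf{R}_1^1$ coincide in their leading terms, I would homogenise and treat $\hat{\mathbf{R}}=\mathbf{Y}\mathbf{Y}^H/(2N)$ as a complex-Gaussian sample covariance matrix formed from $2N$ snapshots with population $\mathbf{R}_1$.

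\emph{Step 2 (spiked-eigenvalue CLT).} After normalising by $\sigma_n^2$, the second spike has strength $\alpha=l_2/\sigma_n^2=1+\gamma_1$, and the relevant aspect ratio is $c=(M-2)/(2N)$ — the $M-2$ being the number of noise dimensions left once the two spikes $l_1,l_2$ are split off, consistently with the denominator of the estimator in \eqref{func10}. The hypothesis $\gamma\|\mathbf{h}_2\|^2\ge 1$ keeps $\gamma_1$ bounded away from $0$, so $\alpha$ lies well above the Baik--Ben~Arous--P\'ech\'e phase-transition threshold $1+\sqrt{c}$ (which $\to 1$ since $N\gg M$); hence $\lambda_2$ is a genuine spike eigenvalue with Gaussian fluctuations, in contrast with the bulk-edge (Tracy--Widom) behaviour of Lemma~\ref{lem1}. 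Invoking the classical CLT for a simple supercritical spike of a spiked covariance model gives
\begin{align*}
\sqrt{2N}\left(\frac{\lambda_2}{\sigma_n^2}-\psi(\alpha)\right)&\;\overset{d}{\longrightarrow}\;\mathcal{N}\!\left(0,\ \alpha^2\Big(1-\tfrac{c}{(\alpha-1)^2}\Big)\right),\\
\psi(\alpha)&=\alpha+\frac{c\,\alpha}{\alpha-1};
\end{align*}
equivalently, a more elementary route projects onto the spike direction $\mathbf{v}_2\approx\mathbf{h}_2/\|\mathbf{h}_2\|$, uses that $2N\,\mathbf{v}_2^{H}\hat{\mathbf{R}}\mathbf{v}_2/l_2$ is asymptotically a scaled $\chi^2$ variate — so $\mathbf{v}_2^{H}\hat{\mathbf{R}}\mathbf{v}_2/\sigma_n^2$ is approximately $\mathcal{N}(\alpha,\alpha^2/(2N))$ — and then adds the deterministic random-matrix bias $c\alpha/(\alpha-1)$ produced by the leakage of the sample eigenvector into the $(M-2)$-dimensional noise subspace.

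\emph{Step 3 (simplification).} Substituting $\alpha=1+\gamma_1$ and $c=(M-2)/(2N)$ gives $\psi(\alpha)=(1+\gamma_1)\big(1+\tfrac{M-2}{2N\gamma_1}\big)$, which is the asserted mean; and since $c=(M-2)/(2N)\to 0$ under $N\gg M$ the variance factor $\alpha^2\big(1-c/(\alpha-1)^2\big)\to(1+\gamma_1)^2$, so $\mathrm{Var}(\lambda_2/\sigma_n^2)\to(1+\gamma_1)^2/(2N)$. Collecting these yields \eqref{func23}.

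\emph{Main obstacle.} The delicate step is the homogenisation in Step~1: under $\mathcal{H}_1$ the columns of $\mathbf{Y}$ are not i.i.d.\ (the first $(1-1/k)2N$ have covariance $\mathbf{R}_1^0$, the last $2N/k$ have covariance $\mathbf{R}_1^1$), so one must show that this block heterogeneity perturbs $\lambda_2$ only at lower order — it does inflate the fluctuation variance somewhat, so \eqref{func23} should be read as the homogenised leading-order approximation — or else invoke a generalised-spike CLT that tolerates block-dependent covariances. Making this rigorous is where the assumptions $\Delta\gamma\to 0$ and $\gamma\|\mathbf{h}_2\|^2\ge 1$ enter, as they also control the error of the Theorem~\ref{the1} approximation itself: the term $\mathrm{P2}$ dropped in \eqref{func48} must be $o(1/\sqrt{N})$ relative to $\mathrm{P1}$.
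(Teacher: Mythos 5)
Your proposal is sound and reaches the stated mean and variance, but it takes a genuinely different route from the paper. You treat $\hat{\mathbf{R}}$ directly as a two-spike spiked covariance model and invoke the Baik--Ben~Arous--P\'ech\'e/Paul-type CLT for the second supercritical spike, with $\alpha=1+\gamma_1$ and $c=(M-2)/(2N)$, then let $c\to 0$; the paper instead reduces to the single-spike result it already used for Lemma~\ref{lem1}: it constructs an auxiliary $(M-1)$-antenna binary test whose $\mathcal{H}_1$ covariance is $\sigma_s^2 K\,\bar{\mathbf{h}}_2\bar{\mathbf{h}}_2^H+\sigma_n^2\mathbf{I}_{M-1}$ with $K=1/k-1/k^2$, cites the known Gaussian approximation for the largest eigenvalue of a rank-one spiked sample covariance (its reference [25]) under each hypothesis, identifies $\lambda_1/\sigma_n^2$ and $\lambda_2/\sigma_n^2$ with the max and min of the two auxiliary largest eigenvalues, and then shows via an explicit overlapping-coefficient (OVL) computation that the OVL vanishes when $\Delta\gamma\to 0$, $N\gg M$, $\gamma\|\mathbf{h}_2\|^2\ge 1$, so that $\lambda_2/\sigma_n^2$ inherits the distribution of the backscatter-only spike. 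What each approach buys: your spiked-CLT argument is more standard machinery, delivers the finite-$c$ bias and variance corrections in one stroke, and makes the supercriticality condition transparent; the paper's OVL step, by contrast, explicitly certifies the eigenvalue identification (that the second sample eigenvalue really tracks the $\gamma_1$ spike and never swaps with the DLI spike), which your proposal assumes implicitly via spike separation and should state, since that is precisely where $\Delta\gamma\to 0$ and $\gamma\|\mathbf{h}_2\|^2\ge 1$ are used in the paper. Note also that both routes share the homogenisation issue you flag (the columns of $\mathbf{Y}$ under $\mathcal{H}_1$ are block-heterogeneous, with only $2N/k$ snapshots actually carrying the $\mathbf{h}_2$ component); the paper sidesteps it by construction of the auxiliary i.i.d.\ problem with the reduced strength $K$ rather than by proving the perturbation is lower order, so your candid identification of this gap is fair but applies to the paper's argument as well.
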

	\begin{proof}
		See Appendix \ref{appen 3}.
	\end{proof}
	According to Lemma \ref{lem2}, the missed detection probability $P_{\rm{md}}$ is expressed as
	\begin{align}\label{missed_detection}
		{{P}_{\rm{md}}} & =P\left[ Y(\mathbf{Y})<\eta |{{\mathcal{H}}_{1}} \right] \nonumber\\ 
		& =1-\mathcal{Q}\left( \frac{\eta -\left( 1+{{\gamma }_{1}} \right)\left( 1+\frac{M-2}{2N{{\gamma }_{1}}} \right)}{\frac{1+{{\gamma }_{1}}}{\sqrt{2N}}} \right).
	\end{align}
	
	It can be concluded from \eqref{missed_detection} that the missed detection probability is the conditional probability of $\gamma_1$, which is time-varying in the same way as ${{\mathbf{h}}_{2}}$. We assume that ${{f}_{{{\gamma }_{1}}}}(x)$ is the PDF of ${{\gamma }_{1}}$. Hence, the average probability of missed detection can be calculated by
	\begin{align}
	\label{theory_Pmd}
		{\overline{{{P}_{\rm{md}}}}} & =\int_{0}^{\infty }{{{P}_{\rm{md}}}(x){{f}_{{{\gamma }_{1}}}}(x)dx} \nonumber\\ 
		& = \! 1 \! - \! \int_{0}^{\infty }{\mathcal{Q}\left( \frac{\eta \! - \! \left( 1 \! + \! x \right)\left( 1+\frac{M-2}{2Nx} \right)}{\frac{1+x}{\sqrt{2N}}} \right){{f}_{{{\gamma }_{1}}}}(x)dx}.
	\end{align}
	
	\subsection{Bit Error Rate}
	According to \eqref{theory_Pfa} and \eqref{theory_Pmd}, the BER can be written as
	\begin{align}
	P_e = & \frac{P_{\rm{fa}} + \overline{{{P}_{\rm{md}}}}}{2}  \nonumber \\
	= & 1 - \frac{1}{2} \int_{0}^{\infty }{\mathcal{Q}\left( \frac{\eta -\left( 1+x \right)\left( 1+\frac{M-2}{2Nx} \right)}{\frac{1+x}{\sqrt{2N}}} \right){{f}_{{{\gamma }_{1}}}}(x)dx} \nonumber \\
	& - \frac{1}{2} {{F}_{TW2}}\left( \frac{\eta -{{\mu }_{N,M-1}}}{{{\sigma }_{N,M-1}}} \right).
	\end{align}
	It is difficult to derive the closed-form expression of the BER. Therefore, we analyze a lower bound of the BER for the blind detector. We assume $P_0$ and $P_1$ denote the probability distributions of ${{\lambda }_{2}}/{\sigma _{n}^{2}}$ under the hypotheses of $\mathcal{H}_0$ and  $\mathcal{H}_1$, respectively. The total variation \cite{bash} can be formulated as
	\begin{equation}
		\mathcal{V}\left( {{P}_{0}}\left\| {{P}_{1}} \right. \right)\triangleq \frac{1}{2}{{\left\| {{p}_{0}}\left( x \right)-{{p}_{1}}\left( x \right) \right\|}_{1}},
	\end{equation}
	 where $p_0$ and $p_1$ denote the probability densities of $P_0$ and $P_1$, respectively, and ${{\left\| \cdot  \right\|}_{1}}$ refers to the $\mathcal{L}_1$ norm. According to \cite{ref28}, the BER of the blind detector satisfies
	\begin{equation}\label{BER}
		{{P}_{e}}\ge 1-\mathcal{V}\left( {{P}_{0}}||{{P}_{1}} \right),
	\end{equation}
	where $\mathcal{V}\left( {{P}_{0}}||{{P}_{1}} \right)$ is characterized by the following theorem.
	\begin{theorem}\label{th2}
	When $N \gg M$ and $\Delta \gamma$ is low, as $\gamma$ approaches infinity, the total variation between $P_0$ and $P_1$ converges to
	\begin{equation}
		\underset{\gamma\to \infty }{\mathop{\lim }}\,\mathcal{V}\left( {{P}_{0}}||{{P}_{1}} \right) \approx \frac{1}{2}\left(1 + \mathcal{Q}\left(-\sqrt{2N}\right)\right).
	\end{equation}
	\end{theorem}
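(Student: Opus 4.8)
The plan is to paste together the two limiting laws from Lemma~\ref{lem1} and Lemma~\ref{lem2} and to evaluate the $\mathcal{L}_1$ distance directly, exploiting that $X:=\lambda_2/\sigma_n^2$ is a normalized eigenvalue and hence supported on $[0,\infty)$. First I would record the ingredients. By Lemma~\ref{lem1}, under $\mathcal{H}_0$ the density $p_0$ of $X$ is the shifted and scaled Tracy--Widom density, which for $N\gg M$ is sharply concentrated about $\mu_{N,M-1}=(1+\sqrt{(M-1)/(2N)})^2$ --- a fixed point near the noise floor $1$ --- with dispersion $\sigma_{N,M-1}=O(N^{-1/2})$, and which does not depend on $\gamma$. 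By Lemma~\ref{lem2}, under $\mathcal{H}_1$ the density $p_1$ of $X$ is that of $\mathcal{N}(m_1,v_1)$ with $m_1=(1+\gamma_1)(1+\frac{M-2}{2N\gamma_1})$, $v_1=(1+\gamma_1)^2/(2N)$, and $\gamma_1=\|\mathbf{h}_2\|^2(1/k-1/k^2)\gamma$. Since $\gamma\to\infty$ forces $\gamma_1\to\infty$, I note the key facts that $m_1\to\infty$, $v_1\to\infty$, and the ratio $m_1/\sqrt{v_1}=\sqrt{2N}\,(1+\frac{M-2}{2N\gamma_1})\to\sqrt{2N}$.

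Next I would use the identity $|a-b|=a+b-2\min\{a,b\}$ to write
\[
\mathcal{V}(P_0\|P_1)=\tfrac12\int_0^\infty p_0(x)\,dx+\tfrac12\int_0^\infty p_1(x)\,dx-\int_0^\infty\min\{p_0(x),p_1(x)\}\,dx,
\]
and evaluate the three terms as $\gamma\to\infty$. The first is $P_0([0,\infty))$, which equals $1$ up to a super-exponentially small Tracy--Widom tail. For the second, the Gaussian model of Lemma~\ref{lem2} gives $\int_0^\infty p_1(x)\,dx=\mathcal{Q}(-m_1/\sqrt{v_1})\to\mathcal{Q}(-\sqrt{2N})$ by the limit of the ratio above. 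For the third I would show the overlap vanishes: splitting at an arbitrary level $K$, one has $\int_0^K\min\{p_0,p_1\}\,dx\le K\sup_{[0,K]}p_1\le K/\sqrt{2\pi v_1}\to 0$ and $\int_K^\infty\min\{p_0,p_1\}\,dx\le P_0([K,\infty))$, so sending $\gamma\to\infty$ and then $K\to\infty$ yields $\int_0^\infty\min\{p_0,p_1\}\,dx\to 0$. Collecting the three limits gives $\mathcal{V}(P_0\|P_1)\to\tfrac12(1+\mathcal{Q}(-\sqrt{2N}))$; the ``$\approx$'' in the statement absorbs $P_0([0,\infty))\approx 1$ together with the Gaussian surrogate for $p_1$.

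I expect the vanishing-overlap step to be the main obstacle: because $v_1\to\infty$ the law $P_1$ does not converge and becomes mutually far from the fixed law $P_0$, so the overlap must be controlled carefully. What makes it work is that the peak height $1/\sqrt{2\pi v_1}$ of $p_1$ itself tends to $0$ (so $p_1$ is uniformly small on every bounded set), the right tail of the Tracy--Widom law $p_0$ is light, and the hypothesis $N\gg M$ pins $P_0$ near the noise floor so essentially none of its mass drifts into the region carrying $P_1$. A secondary point worth stressing is the role of the $\mathcal{Q}(-\sqrt{2N})$ term: it appears only because we integrate over the physical range $[0,\infty)$ while using the Gaussian surrogate of Lemma~\ref{lem2}, so that $\int_0^\infty p_1\to\mathcal{Q}(-\sqrt{2N})<1$ rather than $1$ --- the missing mass $\mathcal{Q}(\sqrt{2N})$ being the probability that the $\mathcal{H}_1$ eigenvalue lies below the noise floor, which is exactly the source of the irreducible BER floor $\tfrac12\mathcal{Q}(\sqrt{2N})$ via \eqref{BER}.
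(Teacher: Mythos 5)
Your proposal is correct and takes essentially the same route as the paper: both rest on Lemmas \ref{lem1} and \ref{lem2}, the observation that $m_1/\sqrt{v_1}=\sqrt{2N}\bigl(1+\tfrac{M-2}{2N\gamma_1}\bigr)\to\sqrt{2N}$ so the $\mathcal{H}_1$ mass away from the noise floor tends to $\mathcal{Q}\left(-\sqrt{2N}\right)$, while the Tracy--Widom law $p_0$ contributes mass $\approx 1$ near the bulk. Your rewriting of the total variation as masses minus overlap, with the peak-height bound $1/\sqrt{2\pi v_1}\to 0$ and the $\gamma$-then-$K$ limit, is just a slightly more careful bookkeeping of the paper's split of $\int_0^\infty\left|p_0(x)-p_1(x)\right|dx$ at a single large point $a$.
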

	\begin{proof}
	The total variation between $P_0$ and $P_1$ is written as
	\begin{align}
  		\underset{\gamma\to \infty }{\mathop{\lim }}\,\mathcal{V}\left( {{P}_{0}}\left\| {{P}_{1}} \right. \right) = & \underset{\gamma\to \infty }{\mathop{\lim }}\, \frac{1}{2} \int_{0}^{\infty }{\left| {{p}_{0}}\left( x \right)-{{p}_{1}}\left( x \right) \right|}dx \nonumber\\ 
 		= & \frac{1}{2} \int_{0}^{a}{\left| {{p}_{0}}\left( x \right)-\underset{\gamma\to \infty }{\mathop{\lim }}\,{{p}_{1}}\left( x \right) \right|}dx \nonumber\\
 		& + \frac{1}{2} \int_{a}^{\infty }{\left| {{p}_{0}}\left( x \right)-\underset{\gamma\to \infty }{\mathop{\lim }}\,{{p}_{1}}\left( x \right) \right|}dx, 
	\end{align}
	where $a$ is a non-negative large number. Through numerical calculations, for a large value of $a$, we have $\int_{a}^{\infty }{{{p}_{0}}\left( x \right)dx}\approx 0$. In this case, the probability of $p_1(x)$ in the interval $[a, +\infty]$ is expressed as
	\begin{align}
	\label{func28}
		\underset{\gamma\to \infty }{\mathop{\lim }}\,\int_{a}^{\infty }{{{p}_{1}}\left( x \right)dx} & \! = \! \underset{\gamma\to \infty }{\mathop{\lim }}\,\mathcal{Q}\left( \frac{a \! - \!\left( 1+{{\gamma }_{1}} \right)\left( 1+\frac{M-2}{2N{{\gamma }_{1}}} \right)}{\frac{\left( 1+{{\gamma }_{1}} \right)}{\sqrt{2N}}} \right) \nonumber\\ 
		& = \mathcal{Q}\left(-\sqrt{2N}\right),
	\end{align}
	where ${{\gamma }_{1}}={{\left\| {{\mathbf{h}}_{2}} \right\|}^{2}}(1/k - 1/k^2)\gamma$. Consequently, we have
	\begin{align}
		\underset{\gamma\to \infty }{\mathop{\lim }}\,\mathcal{V}\left( {{P}_{0}}\left\| {{P}_{1}} \right. \right) & \approx \frac{1}{2} \int_{0}^{a}{{{p}_{0}}\left( x \right)dx} \! + \! \underset{\gamma\to \infty }{\mathop{\lim }}\, \frac{1}{2} \int_{a}^{\infty }{{{p}_{1}}\left( x \right)dx} \nonumber\\
		& \approx \frac{1}{2}\left(1 + \mathcal{Q}\left(-\sqrt{2N}\right)\right).
	\end{align}
	
	Theorem \ref{th2} is proved.
	\end{proof}
	
	Based on Theorem \ref{th2}, the lower bound of the BER in \eqref{BER} can be expressed as
	\begin{equation}
	\underset{\gamma\to \infty }{\mathop{\lim }}\,P_e \ge \frac{1}{2}\left(1 - \mathcal{Q}\left(-\sqrt{2N}\right) \right).
	\end{equation}
	
	When $N \rightarrow +\infty$, the lower bound of $ P_e $ approaches 0. It implies that increasing the value of $N$ can reduce the BER of the proposed blind detector. Furthermore, according to Theorem \ref{th2}, it is evident that an increase in ${{\gamma }_{1}}={{\left\| {{\mathbf{h}}_{2}} \right\|}^{2}} \left(1/k - 1/k^2\right)\gamma$ is directly associated with $\gamma$ and the IDASK. Given a fixed value of $\gamma$, the maximum value of ${{\gamma }_{1}}$ is achieved when $k=2$. In this case, the proposed blind detector (\ref{func7}) exhibits the optimal detection performance.
	\begin{figure*}[t]
	\centering	
	\begin{minipage}{0.45\linewidth}
		\centering
		\includegraphics[width=3.5in]{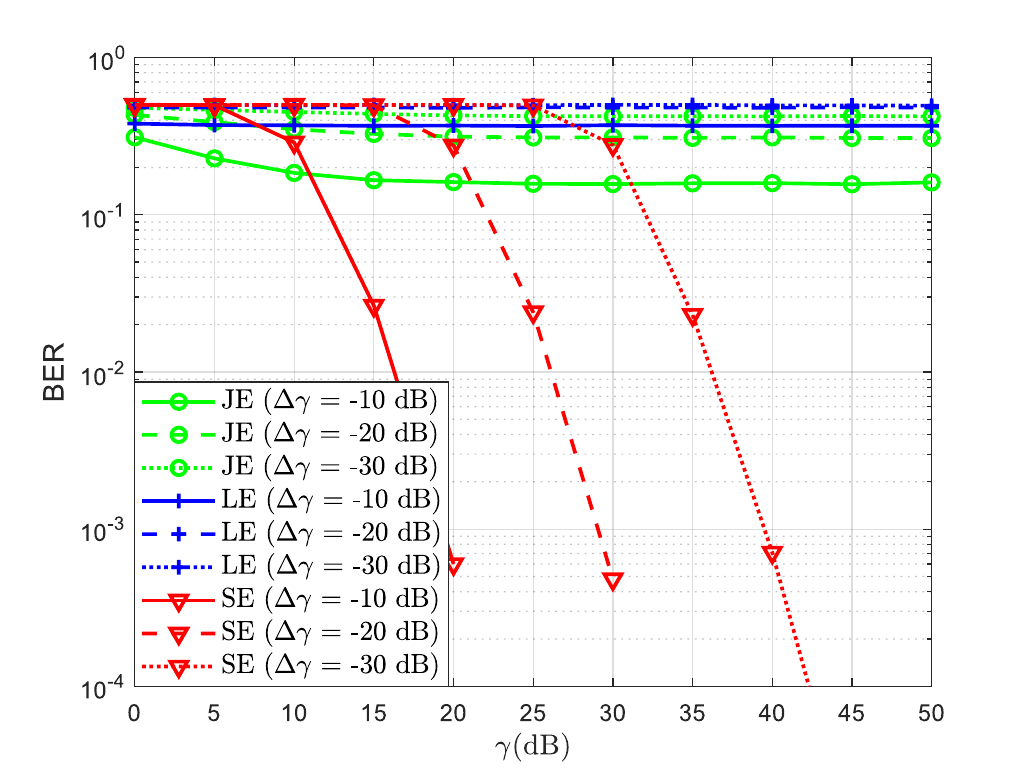}
		\caption{BER versus $\gamma$ for the JE, LE and SE detectors, where $\Delta \gamma = -10,-20,-30$ dB.}
		\label{fig4}
	\end{minipage}
	\hspace{4mm}
	\begin{minipage}{0.45\linewidth}
		\centering
		\includegraphics[width=3.5in]{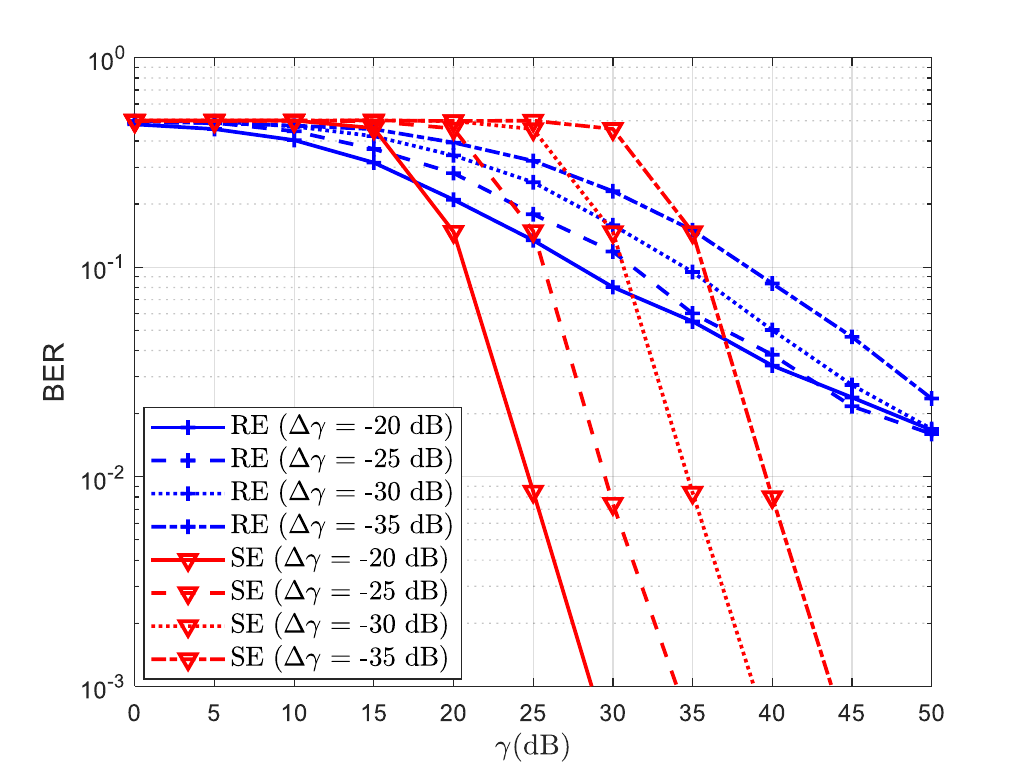}
		\caption{BER versus $\gamma$ for the RE and SE detectors, where $\Delta \gamma = -20,-25,-30,-35$ dB.}
		\label{fig13}
	\end{minipage}
	\end{figure*}
	\begin{figure}[t]
		\centering
		\includegraphics[width=3.5in]{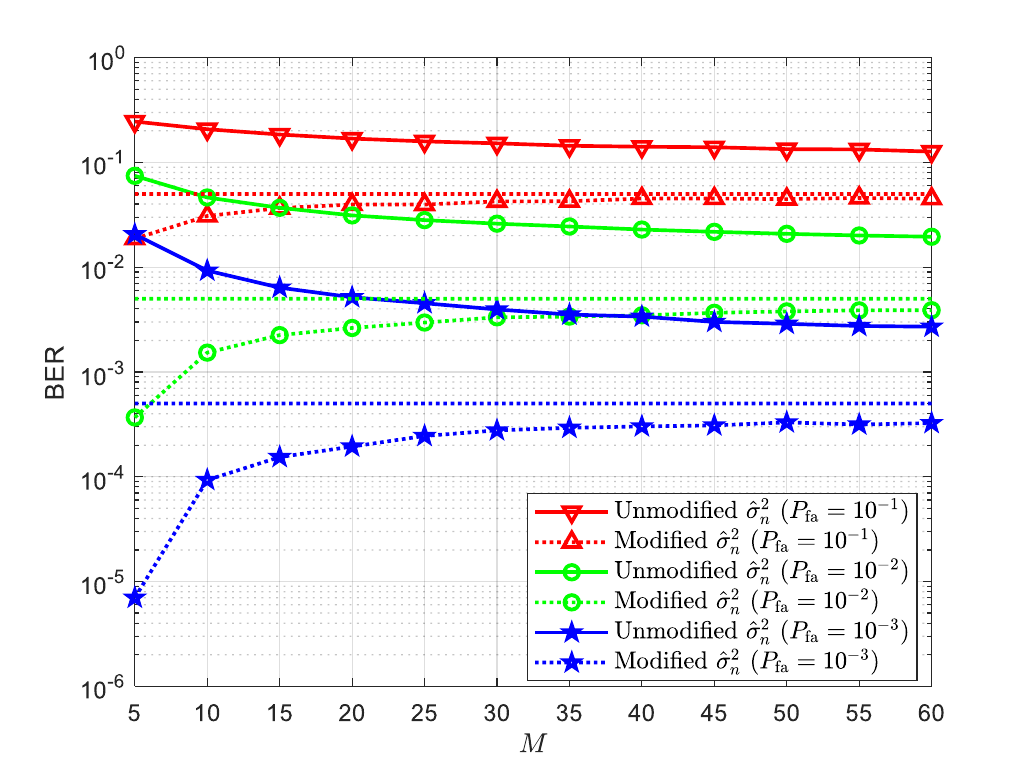}
		\caption{BER versus $M$ for the SE detector under different $P_{\rm{fa}}$ and different estimation schemes of $\hat{\sigma }_{n}^{2}$.}
		\label{fig10}
	\end{figure}
	\section{Simulation Results} \label{simulation}
	In this section, we present simulation results to demonstrate the detection performance of the proposed blind detector. In the presented simulation results, we assume that the noise variance is $\sigma _{n}^{2}=-20 $ dBm, the direct channel gain is $0$ dB, and the value of $m$ in \eqref{modified_noise} is $m=3$. The Monte Carlo method is used to calculate the BER and $P_{\rm{md}}$. For fair comparison, we compare our proposed second largest eigenvalue (SE) based detector with the joint-energy (JE) detector in\cite{related_non3}, the largest eigenvalue (LE) based detector in \cite{related_semi3}, and the efficient ratio (RE) detector in \cite{related_semi5} under the same simulation condition.	
	
	\cref{fig4,fig13} plot the BER versus $\gamma$ of the JE, LE, RE, and SE detectors with different values of $\Delta \gamma$. The number of antennas at the receiver is set to $M=5$ and the BD symbol period is $2N=100$. In the JE and LE detectors, the accurate statistical variances of the received signals are used for the calculation of the detection threshold. In the RE detector, the perfect CSI estimation is assumed. In the SE detector, the detection threshold $\eta$ is employed with $P_{\rm{fa}} = 10^{-4}$ and $P_{\rm{fa}} = 10^{-2}$ in \cref{fig4} and \cref{fig13}, respectively. It can be observed that as the value of $\Delta \gamma$ increases, the BERs of all detectors decrease. Among them, the proposed SE detector shows the best BER performance. Furthermore, with an increase of $\gamma$, the BERs of the SE detector are significantly reduced compared to those of the JE, LE, and RE detectors.
	
	\cref{fig10} shows the BER versus $M$ of the SE detector for a large value of $\gamma$, where $P_{\rm{fa}} = 10^{-1}, 10^{-2}, 10^{-3}$. The noise variance is estimated by the unmodified scheme in \eqref{func10} and the modified scheme in \eqref{modified_noise}. We set $\gamma = 50$ dB, $\Delta \gamma = -20$ dB, and $N = 10M$. As the values of $M$ increases, all the BER curves converge to $P_{\rm{fa}}/2$. More importantly, the BER performance of the modified noise estimation scheme outperforms that of the unmodified scheme.
	\begin{figure*}[t]
	\centering
	\begin{minipage}{0.45\linewidth}
		\centering
		\includegraphics[width=3.5in]{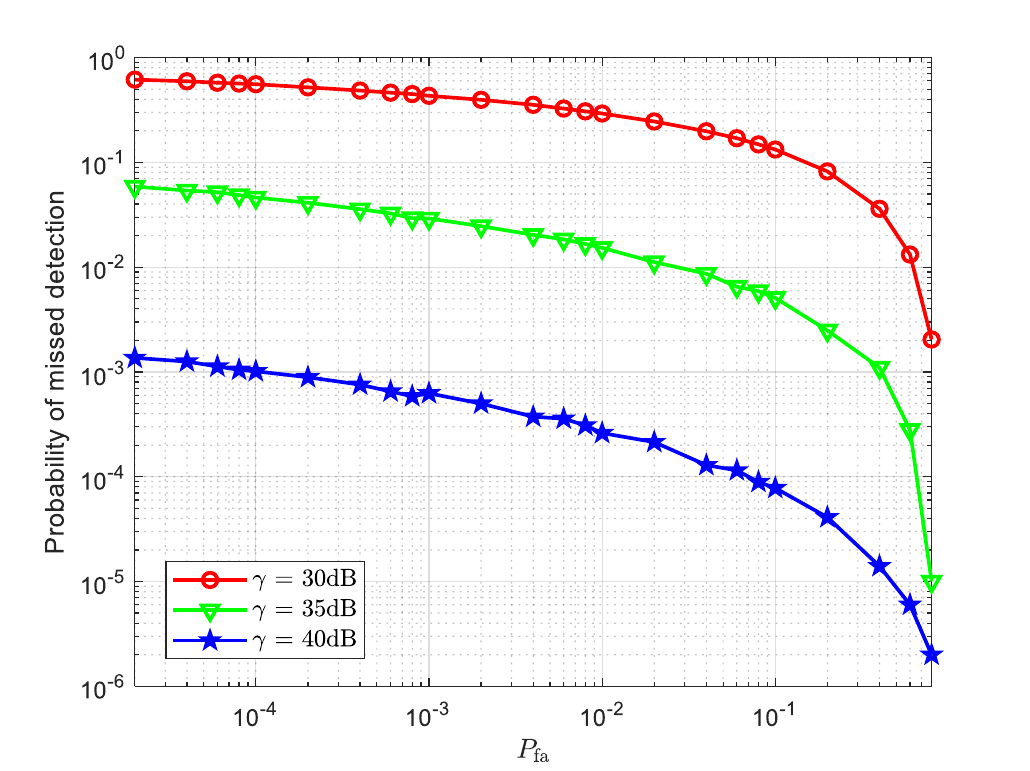}
		\caption{Complementary ROC ($P_{\rm{md}}$ vs. $P_{\rm{fa}}$) for the SE detector, where $\gamma = 30,35,40$ dB.}
		\label{fig5}
	\end{minipage}
	\hspace{4mm}
	\begin{minipage}{0.45\linewidth}
		\centering
		\includegraphics[width=3.5in]{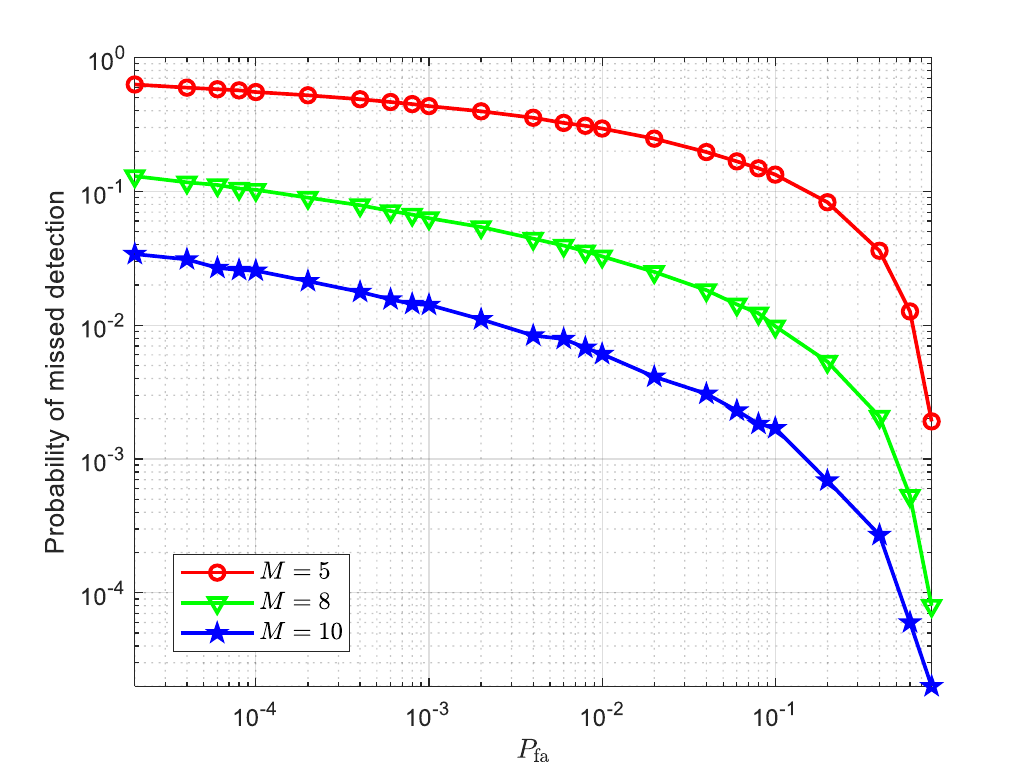}
		\caption{Complementary ROC of the SE detector, where $M = 5,8,10$.}
		\label{fig6}
	\end{minipage}
	\begin{minipage}{0.45\linewidth}
		\centering
		\includegraphics[width=3.5in]{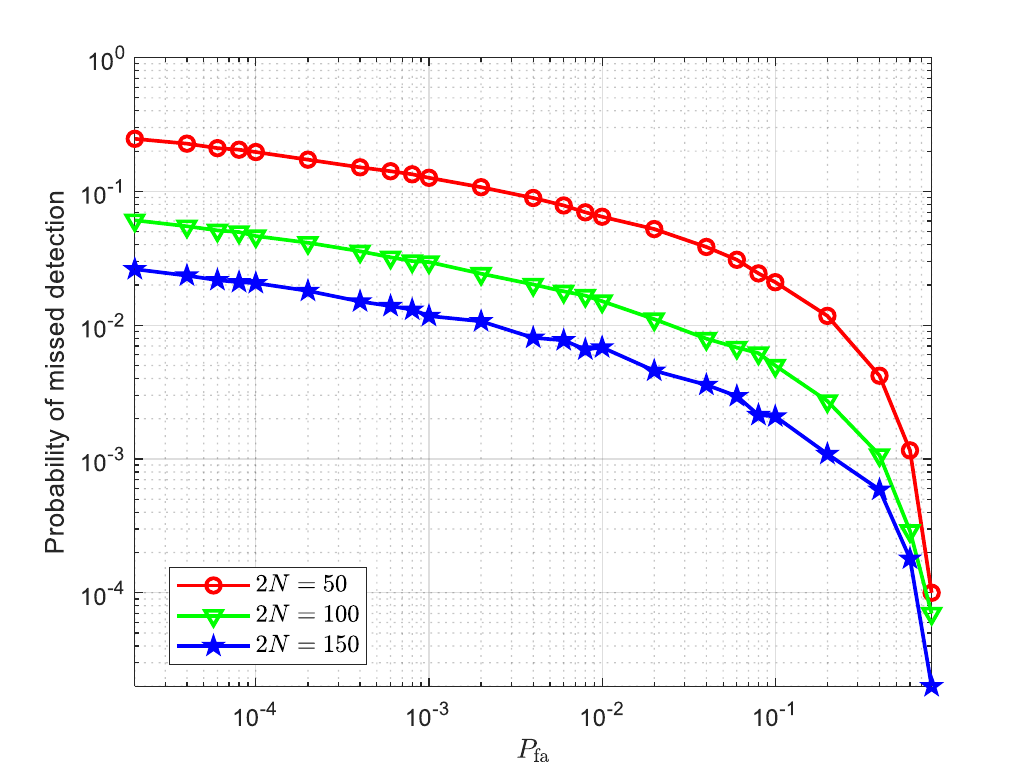}
		\caption{Complementary ROC of the SE detector, where $2N = 50, 100, 150$.}
		\label{fig8}
	\end{minipage}
	\hspace{4mm}
	\begin{minipage}{0.45\linewidth}
		\centering
		\includegraphics[width=3.5in]{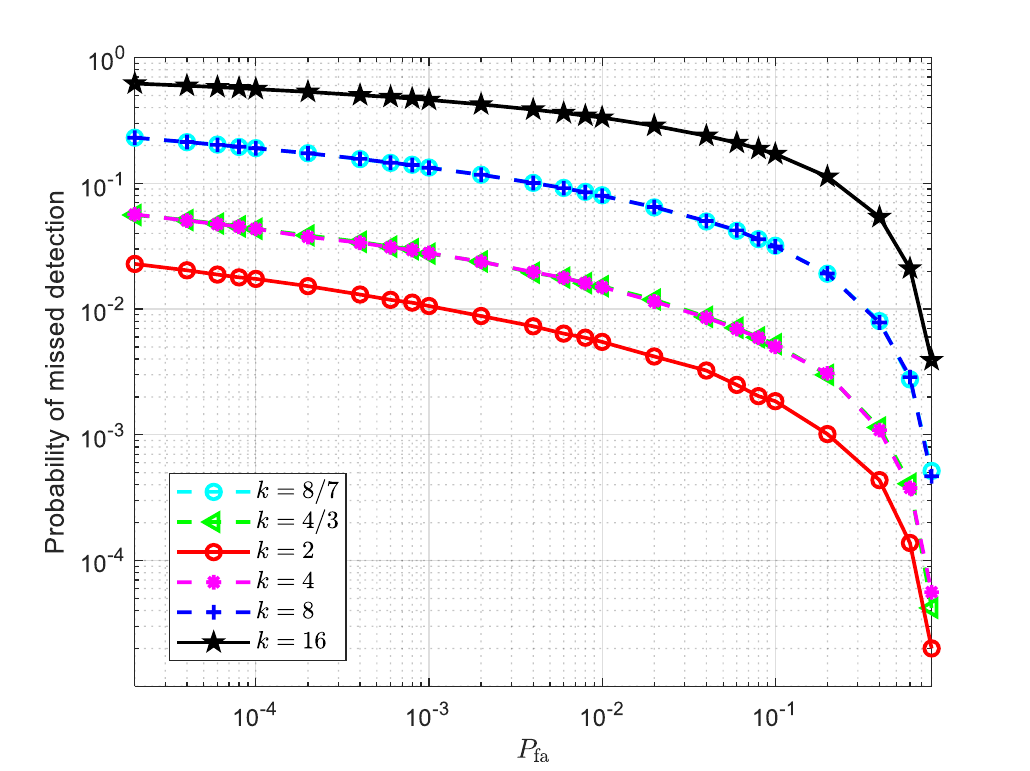}
		\caption{Complementary ROC of the SE detector, where $k = 8/7, 4/3, 2, 4, 8, 16$.}
		\label{fig9}
	\end{minipage}
	\end{figure*}
	
	\cref{fig5,fig6,fig8,fig9} plot the probabilities of missed detection versus $P_{\rm{fa}}$ of the proposed SE detector. Specifically, \cref{fig5} shows the probability of missed detection versus $P_{\rm{fa}}$ under different values of $\gamma$. We set $\Delta \gamma = -30$ dB, $M=5$, and $N=50$. Indeed, the detection performance of SE detector is noticeably improved as $\gamma$ increases. \cref{fig6} shows the probability of missed detection with different $M$, where $\gamma = 30$ dB, $\Delta \gamma = -30$ dB, and $N=50$. It shows that the detection performance of the SE detector is enhanced as $M$ increases, indicating the performance gain provided by spatial diversity. \cref{fig8} demonstrates the probability of missed detection with different values of $N$, where $\gamma = 35$ dB, $\Delta \gamma = -30$ dB and $M=5$. It shows that the detection performance of the SE detector improves as the $N$ increases. Furthermore, \cref{fig9} plots the probability of missed detection of the SE detector with different values of $k$, where $\gamma = 35$ dB, $\Delta \gamma = -30$ dB, $M=5$, and $N=80$. As can be observed in \cref{fig9}, the SE detector exhibits the optimal detection performance when $k = 2$, which is consistent with the analysis in Section IV-B. In addition, these figures also show that the probability of missed detection is significantly reduced by increasing $P_{\rm{fa}}$.
	
	Moreover, \cref{fig11} compares the BER versus $\gamma$ of the SE detector with various digital modulation types in the ambient RF signal. The digital modulation types are BPSK, QPSK, and 16-QAM. The results demonstrate that the proposed SE detector exhibits a similar BER performance for different digital modulation types.
	\begin{figure}[t]
		\centering
		\includegraphics[width=3.5in]{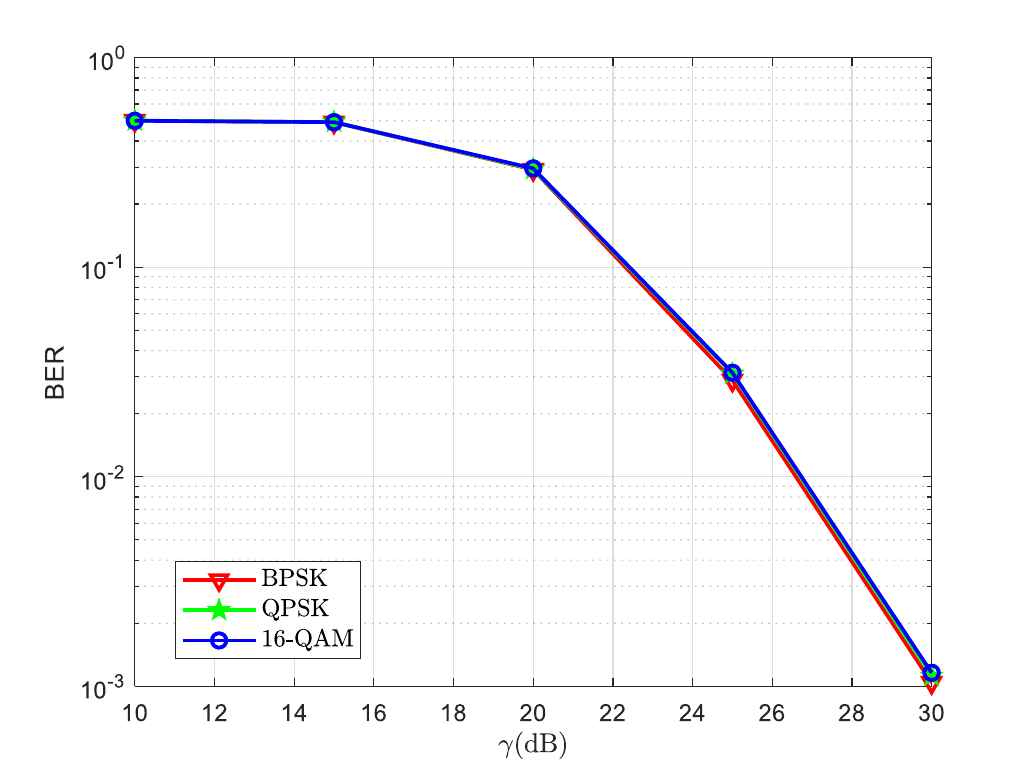}
		\caption{BER versus $\gamma$ of the SE detector. The ambient RF signal employs a variety of digital modulation types, including BPSK, QPSK, and 16-QAM.}
		\label{fig11}
	\end{figure}
	\section{Conclusion} \label{sum}
	This paper investigated the blind symbol detection problem of the AmBC system. Firstly, by leveraging IDASK, the second largest eigenvalue of the received signal covariance matrix was distinguished from the DLI in the proposed blind symbol detector. Moreover, an improved noise estimation scheme was presented to enhance the accuracy of noise variance estimation. Secondly, a theoretical analysis was conducted to evaluate the false alarm probability and the missed detection probability of the blind detector, followed by a lower BER bound. Finally, the simulation results validated that the proposed detector exhibits the optimal detection performance with IDASK using a reflection ratio of $50\%$, i.e., $k = 2$, and exhibits a great detection performance compared to conventional counterparts.
	
	\appendices	
	
	\section{Proof of Lemma \ref{lem1}}
	\label{appen 1}
	We first formulate a binary hypothesis testing when the receiver is equipped with $M-1$ antennas.
	In the presence of a direct channel, the null hypothesis $\mathcal{H}_0$ is applicable. 
	Conversely, in the absence of a direct channel, the alternative hypothesis $\mathcal{H}_1$ is relevant.
	Thus, the received signal $\bar{\mathbf{Y}}=[{{\mathbf{y}}_{1}},{{\mathbf{y}}_{2}},...,{{\mathbf{y}}_{2N}}]\in {{\mathbf{C}}^{(M - 1)\times 2N}}$ follows
	\begin{equation}\label{construct_problem1}
		{\bar{\mathbf{Y}}}\sim \left\{ \begin{aligned}
		& \mathcal{C}\mathcal{N}(0,\bar{\mathbf{R}}_{0}),{{\mathcal{H}}_{0}}, \\ 
		& \mathcal{C}\mathcal{N}(0,\bar{\mathbf{R}}_{1}),{{\mathcal{H}}_{1}}, \\ 
		\end{aligned} \right.
	\end{equation}
	where ${\bar{\mathbf{R}}_{0}}=\sigma _{s}^{2}{{\bar{{\mathbf{h}}}}_{1}}\bar{\mathbf{h}}_{1}^{H}+\sigma _{n}^{2}{{\mathbf{I}}_{M-1}}$, ${\bar{\mathbf{R}}_{1}}= \sigma _{n}^{2}{{\mathbf{I}}_{M - 1}}$, and $\bar{\mathbf{h}}_1$ denotes the direct channel coefficient. 
	We assume that the covariance matrix of $\bar{\mathbf{Y}}$ is $\bar{\mathbf{R}}$, whose eigenvalues satisfy ${\bar{\lambda }_{1}}\ge {\bar{\lambda }_{2}}\ge ...\ge {\bar{\lambda }_{M-1}}$. 
	Under hypothesis $\mathcal{H}_0$, according to \cite{ref25,ref24}, the distribution of the largest eigenvalue of $\bar{\mathbf{R}}$ follows 
	\begin{equation}
		\label{func33}
		{\bar{\lambda }^0_{1}}/\sigma _{n}^{2}\sim \mathcal{N}\left( \left( 1+{{\gamma }_{0}} \right)\left( 1+\frac{M-2}{2N{{\gamma }_{0}}} \right),\frac{{{\left( 1+{{\gamma }_{0}} \right)}^{2}}}{2N} \right),
	\end{equation}
	where $\bar{\lambda }^i_{m}$ is the $m$-th eigenvalue of $\bar{\mathbf{R}}$ under hypothesis $\mathcal{H}_i$ for $m=1,2,\ldots,M-1$ and $i=0,1$, and ${{\gamma }_{0}}={{\left\| {\bar{\mathbf{h}}_1} \right\|}^{2}} \gamma$. Then, under hypothesis $\mathcal{H}_1$, based on \cite{ref29}, the distribution of the largest eigenvalue of $\bar{\mathbf{R}}$ satisfies
	\begin{equation} \label{eq:eq52}
		\frac{{\bar{{\lambda }}^1_1}/\sigma _{n}^{2}-{{\mu }_{N,M-1}}}{{{\sigma }_{N,M-1}}}\sim T{{W}_{2}}.
	\end{equation}
	
	In the original hypotheses of \eqref{func8}, for hypothesis $\mathcal{H}_0$, $\lambda_{1}$ and $\lambda_{2}$ of $\hat{\mathbf{R}}$ in \eqref{sample-cov} can be recalculated as 
	\begin{equation} \label{eq:eq53}
		\lambda_{1}/\sigma _{n}^{2} = \text{max}\{{\bar{\lambda }^{0}_{1}}/\sigma _{n}^{2},{\bar{\lambda }^{1}_{1}}/\sigma _{n}^{2}\}, 
		\end{equation}
		\begin{equation}\label{eq:eq54}
		\lambda_{2}/\sigma _{n}^{2} = \text{min}\{{\bar{\lambda }^{0}_{1}}/\sigma _{n}^{2},{\bar{\lambda }^{1}_{1}}/\sigma _{n}^{2}\}.
	\end{equation}
	
	The PDFs of ${\bar{\lambda }^{0}_{1}}/\sigma _{n}^{2}$ and ${\bar{\lambda }^{1}_{1}}/\sigma _{n}^{2}$ is denoted by $\bar{p}_0$ and $\bar{p}_1$, respectively.
	According to the definition of the overlapping coefficient (OVL) in \cite{OVL_define}, for any $a \in (-\infty, +\infty)$, the OVL of ${\bar{\lambda}^{0}_{1}}/\sigma^{2}_{n}$ and ${\bar{\lambda}^{1}_{1}} / \sigma_{n}^{2}$ is written as
	\begin{align}\label{ovl1}
	\text{OVL} 
	& = \int^{+\infty}_{-\infty} \min\{\bar{p}_0(x), \bar{p}_1(x)\} dx   \nonumber \\
	& = \int\limits^{+\infty}_{a} \min\{\bar{p}_0(x), \bar{p}_1(x)\} dx  + \int\limits^{a}_{-\infty} \min\{\bar{p}_0(x), \bar{p}_1(x)\} dx   \nonumber \\
	& \le \int_{a}^{\infty }{\bar{p}_{1}}\left( x \right)dx + \int_{-\infty}^{a }{\bar{p}_{0}}\left( x \right)dx.
	\end{align}
	Based on \eqref{func33}, we have
	\begin{align}\label{norm_appro}
		\int_{-\infty}^{a}{{\bar{p}_{0}}\left( x \right)dx} & = 1 - \mathcal{Q}\left( \frac{a \! - \!\left( 1+{{\gamma }_{0}} \right)\left( 1+\frac{M-2}{2N{{\gamma }_{0}}} \right)}{\frac{\left( 1+{{\gamma }_{0}} \right)}{\sqrt{2N}}} \right).
	\end{align}
	As $N \gg M$ and $\gamma_0 \rightarrow +\infty$, \eqref{norm_appro} can be approximated as
	\begin{equation}
	\int_{-\infty}^{a}{{\bar{p}_{0}}\left( x \right)dx} \approx 1 - \mathcal{Q}\left(-\sqrt{2N}\right) \approx 0.
	\end{equation}
	Furthermore, based on \eqref{eq:eq52}, through numerical calculations, we obtain $\int_{0}^{5}{{\bar{p}_{1}}\left( x \right)dx}=1-{{10}^{-10}}$.
	Consequently, for a large value of $a$, such as $a>5$, we have $\int_{a}^{+\infty }{{\bar{p}_{1}}\left( x \right)dx}\approx 0$.
	 
	 Therefore, for $N \gg M$, $\gamma_0 \rightarrow +\infty$, and a large value of $a$, the OVL in \eqref{ovl1} approaches zero.
	 In this case, the distribution of $\lambda_{2}/\sigma _{n}^{2}$ can be approximated by the distribution of ${\bar{\lambda }^{1}_{1}}/\sigma _{n}^{2}$.
	 Thus, Lemma \ref{lem1} is proved.

	\section{Proof of Lemma \ref{lem2}}
	\label{appen 3}
	Similar to Appendix \ref{appen 1}, we also formulate a binary hypothesis testing when the receiver is equipped with $M-1$ antennas.
	In the presence of the direct channel, the null hypothesis $\mathcal{H}_0$ is applicable. 
	In the presence of the cascaded channel, the alternative hypothesis $\mathcal{H}_1$ is relevant.
	Thus, the received signal $\bar{\mathbf{Y}}=[{{\mathbf{y}}_{1}},{{\mathbf{y}}_{2}},...,{{\mathbf{y}}_{2N}}]\in {{\mathbf{C}}^{(M-1)\times 2N}}$ follows
	\begin{equation}\label{construct_problem2}
		\bar{\mathbf{Y}}\sim \left\{ \begin{aligned}
		& \mathcal{C}\mathcal{N}(0,\bar{\mathbf{R}}_{0}),{{\mathcal{H}}_{0}}, \\ 
		& \mathcal{C}\mathcal{N}(0,\bar{\mathbf{R}}_{1}),{{\mathcal{H}}_{1}}, \\ 
		\end{aligned} \right.
	\end{equation}
	where ${\bar{\mathbf{R}}_{0}}\!=\!\sigma _{s}^{2}{{\bar{{\mathbf{h}}}}_{1}}\bar{\mathbf{h}}_{1}^{H}+\sigma _{n}^{2}{{\mathbf{I}}_{M-1}}$, ${\bar{\mathbf{R}}_{1}}\!=\!\sigma _{s}^{2}{{\bar{\mathbf{h}}}_{2}}\bar{\mathbf{h}}_{2}^{H} K + \sigma _{n}^{2}{{\mathbf{I}}_{M-1}}$, $\bar{\mathbf{h}}_1$ and $\bar{\mathbf{h}}_2$ denote the direct and cascaded channels, respectively, and $K \! =\! 1/k\! -\! 1/k^2$. Under hypothesis $\mathcal{H}_i (i\in\{0,1\})$, the distribution of the largest eigenvalue $\bar{\lambda }^{i}_{1}/\sigma _{n}^{2}$ of $\bar{\mathbf{R}}$ satisfies\cite{ref25}
	\begin{equation}
		\label{func38}
		{\bar{\lambda }^{i}_{1}}/\sigma _{n}^{2}\sim \mathcal{N}\left( \left( 1+{{\gamma }_{i}} \right)\left( 1+\frac{M-2}{2N{{\gamma }_{i}}} \right),\frac{{{\left( 1+{{\gamma }_{i}} \right)}^{2}}}{2N} \right).
	\end{equation}
	where ${{\gamma }_{0}}={{\left\| {\bar{\mathbf{h}}_{1}} \right\|}^{2}} \gamma$ and $ {{\gamma }_{1}}={{\left\| {\bar{\mathbf{h}}_{2}} \right\|}^{2}}K \gamma$. 
	
	According to \cite{related_semi3}, the OVL of ${\bar{\lambda}^{0}_{1}}/\sigma^{2}_{n}$ and ${\bar{\lambda}^{1}_{1}} / \sigma_{n}^{2}$ can be expressed as
	\begin{align}\label{OVL}
		& \text{OVL} \nonumber \\ 
		&=   \mathcal{Q}\left( \frac{\sqrt{2N}\gamma \! - \! \frac{M-2}{\sqrt{2N}\gamma {{\left\| {{\bar{\mathbf{{h}}}}_{1}} \right\|}^{2}} K {{\left\| {{\bar{\mathbf{{h}}}}_{2}} \right\|}^{2}}}}{\left( {{\left\| {{\bar{\mathbf{{h}}}}_{1}} \right\|}^{2}} + K {{\left\| {{\bar{\mathbf{{h}}}}_{2}} \right\|}^{2}} \right)\gamma +2}\left( \left| {{\left\| {{\bar{\mathbf{{h}}}}_{1}} \right\|}^{2}} \!- \! K {{\left\| {{\bar{\mathbf{{h}}}}_{2}} \right\|}^{2}} \right| \right) \right).
	\end{align}
	When $\Delta \gamma \rightarrow 0$ , $N \gg M$, and $\gamma {{\left\| {{\bar{\mathbf{{h}}}}_{2}} \right\|}^{2}} \ge 1$, \eqref{OVL} is approximated as
	\begin{align} \label{eq:eq61}
		\text{OVL} & = \mathcal{Q}\left( \frac{\sqrt{2N}\gamma - \frac{M-2}{\sqrt{2N}\gamma {{\left\| {{\bar{\mathbf{{h}}}}_{1}} \right\|}^{2}}{{\left\| {{\bar{\mathbf{{h}}}}_{2}} \right\|}^{2}}}}{\left( {1} + K \frac{{\left\| {{\bar{\mathbf{{h}}}}_{2}} \right\|}^{2}}{{\left\| {{\bar{\mathbf{{h}}}}_{1}} \right\|}^{2}} \right)\gamma + \frac{2}{{\left\| {{\bar{\mathbf{{h}}}}_{1}} \right\|}^{2}}}\left( \left| {1} \!- \! K \frac{{\left\| {{\bar{\mathbf{{h}}}}_{2}} \right\|}^{2}}{{\left\| {{\bar{\mathbf{{h}}}}_{1}} \right\|}^{2}} \right| \right) \right) \nonumber\\
		& \approx \mathcal{Q}\left( \frac{\sqrt{2N}\gamma - \frac{M-2}{\sqrt{2N}\gamma {{\left\| {{\bar{\mathbf{{h}}}}_{1}} \right\|}^{2}}{{\left\| {{\bar{\mathbf{{h}}}}_{2}} \right\|}^{2}}}}{\gamma + \frac{2}{{\left\| {{\bar{\mathbf{{h}}}}_{1}} \right\|}^{2}}} \right) \nonumber\\
		& \approx \mathcal{Q}\left( \frac{\sqrt{2N}\gamma}{\gamma + \frac{2}{{\left\| {{\bar{\mathbf{{h}}}}_{1}} \right\|}^{2}}} \right).
	\end{align}
	
	Eq. \eqref{eq:eq61} shows that for a large value of $N$, the OVL approaches $0$. 
	Therefore, based on \eqref{eq:eq53} and \eqref{eq:eq54}, the distribution of the second largest eigenvalue $\lambda_{2}/\sigma^2_{n}$ can be approximated by the distribution of $\bar{\lambda}^1_1/\sigma^2_n$. 
	Lemma \ref{lem2} is proved.

	\bibliographystyle{IEEEtran}
	\bibliography{REFE}

\begin{thebibliography}{10}
\providecommand{\url}[1]{#1}
\csname url@samestyle\endcsname
\providecommand{\newblock}{\relax}
\providecommand{\bibinfo}[2]{#2}
\providecommand{\BIBentrySTDinterwordspacing}{\spaceskip=0pt\relax}
\providecommand{\BIBentryALTinterwordstretchfactor}{4}
\providecommand{\BIBentryALTinterwordspacing}{\spaceskip=\fontdimen2\font plus
\BIBentryALTinterwordstretchfactor\fontdimen3\font minus
  \fontdimen4\font\relax}
\providecommand{\BIBforeignlanguage}[2]{{%
\expandafter\ifx\csname l@#1\endcsname\relax
\typeout{** WARNING: IEEEtran.bst: No hyphenation pattern has been}%
\typeout{** loaded for the language `#1'. Using the pattern for}%
\typeout{** the default language instead.}%
\else
\language=\csname l@#1\endcsname
\fi
#2}}
\providecommand{\BIBdecl}{\relax}
\BIBdecl

\bibitem{IOT1}
T.~Islam, S.~C. Mukhopadhyay, and N.~K. Suryadevara, ``Smart sensors and
  internet of things: A postgraduate paper,'' \emph{IEEE Sensors Journal},
  vol.~17, no.~3, pp. 577--584, Feb. 2017.

\bibitem{IOT2}
T.~Jiang, Y.~Zhang, W.~Ma, M.~Peng, Y.~Peng, M.~Feng, and G.~Liu, ``Backscatter
  communication meets practical battery-free internet of things: A survey and
  outlook,'' \emph{IEEE Communications Surveys \& Tutorials}, vol.~25, no.~3,
  pp. 2021--2051, 3rd Quart. 2023.

\bibitem{AmBC1}
N.~Van~Huynh, D.~T. Hoang, X.~Lu, D.~Niyato, P.~Wang, and D.~I. Kim, ``Ambient
  backscatter communications: A contemporary survey,'' \emph{IEEE
  Communications Surveys \& Tutorials}, vol.~20, no.~4, pp. 2889--2922, 4th
  Quart. 2018.

\bibitem{AmBC2}
D.~T. Hoang, D.~Niyato, P.~Wang, D.~I. Kim, and Z.~Han, ``Ambient backscatter:
  A new approach to improve network performance for rf-powered cognitive radio
  networks,'' \emph{IEEE Transactions on Communications}, vol.~65, no.~9, pp.
  3659--3674, Sep. 2017.

\bibitem{hard_related}
Q.~Zhang, H.~Guo, Y.-C. Liang, and X.~Yuan, ``Constellation learning-based
  signal detection for ambient backscatter communication systems,'' \emph{IEEE
  Journal on Selected Areas in Communications}, vol.~37, no.~2, pp. 452--463,
  Feb. 2019.

\bibitem{related_cor1}
H.~Chen, Y.-Y. Wei, H.~Chen, and W.-Q. Wang, ``Adaptive detection for
  multi-antenna ambient backscatter communications system with mfsk
  modulation,'' \emph{IEEE Internet of Things Journal}, vol.~11, no.~16, pp.
  26\,819 -- 26\,825, Aug. 2024.

\bibitem{related_cor2}
Y.-Q. Hu, H.~Chen, S.-L. Ji, W.-Q. Wang, and H.~Chen, ``Adaptive detector for
  fda-based ambient backscatter communications,'' \emph{IEEE Transactions on
  Wireless Communications}, vol.~21, no.~12, pp. 10\,381--10\,392, Dec. 2022.

\bibitem{related_cor3}
Q.~Tao, C.~Zhong, X.~Chen, H.~Lin, and Z.~Zhang, ``Optimal detection for
  ambient backscatter communication systems with multiantenna reader under
  complex gaussian illuminator,'' \emph{IEEE Internet of Things Journal},
  vol.~7, no.~12, pp. 11\,371--11\,383, Dec. 2020.

\bibitem{related_non1}
G.~Wang, F.~Gao, R.~Fan, and C.~Tellambura, ``Ambient backscatter communication
  systems: Detection and performance analysis,'' \emph{IEEE Transactions on
  Communications}, vol.~64, no.~11, pp. 4836--4846, Nov. 2016.

\bibitem{related_non2}
Q.~Tao, C.~Zhong, H.~Lin, and Z.~Zhang, ``Symbol detection of ambient
  backscatter systems with manchester coding,'' \emph{IEEE Transactions on
  Wireless Communications}, vol.~17, no.~6, pp. 4028--4038, Jun. 2018.

\bibitem{related_non3}
J.~Qian, F.~Gao, G.~Wang, S.~Jin, and H.~Zhu, ``Noncoherent detections for
  ambient backscatter system,'' \emph{IEEE Transactions on Wireless
  Communications}, vol.~16, no.~3, pp. 1412--1422, Mar. 2017.

\bibitem{related_non4}
G.~Yang, Z.~Luo, N.~Jin, Y.-C. Liang, Y.~Xu, and G.~Wang, ``Non-coherent
  parallel detection of ambient backscatter communications with multiple
  tags,'' \emph{IEEE Transactions on Vehicular Technology}, vol.~72, no.~4, pp.
  5344--5349, Apr. 2023.

\bibitem{related_non5}
C.~Chen, G.~Wang, H.~Guan, Y.-C. Liang, and C.~Tellambura, ``Transceiver design
  and signal detection in backscatter communication systems with
  multiple-antenna tags,'' \emph{IEEE Transactions on Wireless Communications},
  vol.~19, no.~5, pp. 3273--3288, May 2020.

\bibitem{related_non6}
S.~Ma, G.~Wang, R.~Fan, and C.~Tellambura, ``Blind channel estimation for
  ambient backscatter communication systems,'' \emph{IEEE Communications
  Letters}, vol.~22, no.~6, pp. 1296--1299, Jun. 2018.

\bibitem{related_semi1}
Y.~Chen and W.~Feng, ``Novel signal detectors for ambient backscatter
  communications in internet of things applications,'' \emph{IEEE Internet of
  Things Journal}, vol.~11, no.~3, pp. 5388--5400, Feb. 2024.

\bibitem{related_semi2}
S.~Guruacharya, X.~Lu, and E.~Hossain, ``Optimal non-coherent detector for
  ambient backscatter communication system,'' \emph{IEEE Transactions on
  Vehicular Technology}, vol.~69, no.~12, pp. 16\,197--16\,201, Dec. 2020.

\bibitem{related_semi3}
Q.~Tao, C.~Zhong, X.~Chen, H.~Lin, and Z.~Zhang, ``Maximum-eigenvalue detector
  for multiple antenna ambient backscatter communication systems,'' \emph{IEEE
  Transactions on Vehicular Technology}, vol.~68, no.~12, pp. 12\,411--12\,415,
  Dec. 2019.

\bibitem{related_semi4}
J.~Qian, F.~Gao, G.~Wang, S.~Jin, and H.~Zhu, ``Semi-coherent detection and
  performance analysis for ambient backscatter system,'' \emph{IEEE
  Transactions on Communications}, vol.~65, no.~12, pp. 5266--5279, Dec. 2017.

\bibitem{related_semi5}
W.~Liu, S.~Shen, D.~H.~K. Tsang, R.~K. Mallik, and R.~Murch, ``An efficient
  ratio detector for ambient backscatter communication,'' \emph{IEEE
  Transactions on Wireless Communications}, vol.~23, no.~6, pp. 5908 -- 5921,
  Jun. 2024.

\bibitem{related_semi6}
W.~Zhao, G.~Wang, S.~Atapattu, and B.~Ai, ``Blind channel estimation in ambient
  backscatter communication systems with multiple-antenna reader,'' in
  \emph{2018 IEEE/CIC International Conference on Communications in China
  (ICCC)}, Aug. 2018, pp. 320--324.

\bibitem{related_semi7}
H.~Guo, Q.~Zhang, S.~Xiao, and Y.-C. Liang, ``Exploiting multiple antennas for
  cognitive ambient backscatter communication,'' \emph{IEEE Internet of Things
  Journal}, vol.~6, no.~1, pp. 765--775, Feb. 2019.

\bibitem{10353962}
S.~Zargari, A.~Hakimi, F.~Rezaei, C.~Tellambura, and A.~Maaref, ``Signal
  detection in ambient backscatter systems: Fundamentals, methods, and
  trends,'' \emph{IEEE Access}, vol.~11, pp. 140\,287--140\,324, Dec. 2023.

\bibitem{ref23}
A.~Taherpour, M.~Nasiri-Kenari, and S.~Gazor, ``Multiple antenna spectrum
  sensing in cognitive radios,'' \emph{IEEE Transactions on Wireless
  Communications}, vol.~9, no.~2, pp. 814--823, Feb. 2010.

\bibitem{ref26}
V.~A. Marvcenko and L.~A. Pastur, ``Distribution of eigenvalues for some sets
  of random matrices,'' \emph{Mathematics of The Ussr-sbornik}, vol.~1, pp.
  457--483, Apr. 1967.

\bibitem{bash}
B.~A. Bash, D.~Goeckel, and D.~Towsley, ``Limits of reliable communication with
  low probability of detection on awgn channels,'' \emph{IEEE Journal on
  Selected Areas in Communications}, vol.~31, no.~9, pp. 1921--1930, Sep. 2013.

\bibitem{ref28}
X.~Peng, J.~Wang, S.~Xiao, and W.~Tang, ``Strategies in covert communication
  with imperfect channel state information,'' in \emph{2021 IEEE Global
  Communications Conference (GLOBECOM)}, Dec. 2021, pp. 1--6.

\bibitem{ref25}
F.~Haddadi, M.~Malek-Mohammadi, M.~M. Nayebi, and M.~R. Aref, ``Statistical
  performance analysis of mdl source enumeration in array processing,''
  \emph{IEEE Transactions on Signal Processing}, vol.~58, no.~1, pp. 452--457,
  Jan. 2010.

\bibitem{ref24}
R.~V. Hogg, J.~McKean, and A.~T. Craig, \emph{Introduction to Mathematical
  Statistics (6th Edition)}.\hskip 1em plus 0.5em minus 0.4em\relax Pearson,
  Jun. 2004.

\bibitem{ref29}
I.~M. Johnstone, ``{On the distribution of the largest eigenvalue in principal
  components analysis},'' \emph{The Annals of Statistics}, vol.~29, no.~2, pp.
  295--327, Apr. 2001.

\bibitem{OVL_define}
H.~F. Inman and E.~L.~B. Jr, ``The overlapping coefficient as a measure of
  agreement between probability distributions and point estimation of the
  overlap of two normal densities,'' \emph{Communications in Statistics -
  Theory and Methods}, vol.~18, no.~10, pp. 3851--3874, Mar. 1988.

\end{thebibliography}
	\end{document}